\newcommand{\Nothing}{{\tt fail}}
\newcommand{\Just}[1]{#1}
\newcommand{\fivespaces}{\;\;\;\;\;}
\newcommand{\tenspaces}{\fivespaces\fivespaces}
\newcommand{\interf}{\fivespaces}
\newcommand{\mylabel}[1]{\, \mathbf{(#1)}}
\newcommand{\Con}[2]{#1\,#2}
\newcommand{\Choice}[2]{#1\,|\:#2}
\newcommand{\Chre}[2]{#1\,|\,#2}
\newcommand{\Choexe}[2]{#1\;\;|\;\;#2}
\newcommand{\Gk}{G_k}
\newcommand{\Wp}{w^\prime}
\newcommand{\Anyx}{X}
\newcommand{\Bvert}{\big\vert}
\newcommand{\Lp}{\stackrel{\mbox{\tiny{P\!E\!G}}}{\leadsto}}
\newcommand{\Nlp}{\not \stackrel{\mbox{\tiny{P\!E\!G}}}{\leadsto}}
\newcommand{\Lr}{\stackrel{\mbox{\tiny{R\!E}}}{\leadsto}}
\newcommand{\Epsi}{\varepsilon}
\newcommand{\Rp}[2]{\Pi(#1,\,#2)}
\newcommand{\Tup}[2]{(#1,\,#2)}
\newcommand{\Peg}[2]{#1[#2]}
\newcommand{\Pgg}[1]{\Peg{G}{#1}}
\newcommand{\Pgk}[1]{\Peg{\Gk}{#1}}
\newcommand{\Grm}[4]{(#1,\,#2,\,#3,\,#4)}
\newcommand{\Mat}[2]{#1\;\,#2\,}
\newcommand{\Reg}[2]{#1\;\,#2\,}
\newcommand{\Matg}[2]{\Mat{\Pgg{#1}}{#2}}
\newcommand{\Matk}[2]{\Mat{\Pgk{#1}}{#2}}
\newcommand{\Pk}{p_k}
\newcommand{\Vk}{V_k}
\newcommand{\Ek}{e_k}
\newcommand{\Ch}[1]{\texttt{#1}}
\newcommand{\Fst}{FIRST}
\newcommand{\arrow}{\rightarrow}
\newcommand{\Pdot}{\textbf{.}}
\newcommand{\Fout}{f_{\mathit{out}}}
\newcommand{\Fin}{f_{\mathit{in}}}
\newcommand{\Isnull}{\mathit{empty}}
\newcommand{\Nnull}{\neg\Isnull}
\newcommand{\Hase}{\mathit{null}}
\newcommand{\Nemp}{\neg\Hase}
\newcommand{\True}{\textbf{true}}
\newcommand{\False}{\textbf{false}}
\newcommand{\Lor}[2]{#1 \;\vee\; #2}
\newcommand{\Land}[2]{#1 \;\wedge\; #2}
\newcommand{\E}{e}
\newcommand{\Eind}{?\rangle e_1}
\newcommand{\Eindd}[1]{?\rangle #1}
\newcommand{\Epos}{e_1^{*+}}
\newcommand{\Eposs}[1]{#1^{*+}}
\newcommand{\Elazy}{e_1^{*?}}
\newcommand{\Elazyy}[1]{#1^{*?}}
\newcommand{\Enot}{?!e_1}
\newcommand{\Enott}[1]{?!#1}
\newcommand{\Eand}{?\!\!=\!e_1}
\newcommand{\Eandd}[1]{?\!\!=\!#1}
\newcommand{\Myeq}{\;=\;}
\newtheorem{lemma}{Lemma}
\journal{Science of Computer Programming}
\begin{document}

\begin{frontmatter}

\title{From Regexes to Parsing Expression Grammars}

\author{S\'{e}rgio Medeiros}
\ead{sergio@ufs.br}
\address{Department of Computer Science -- UFS -- Aracaju -- Brazil} 

\author{Fabio Mascarenhas}
\ead{mascarenhas@ufrj.br}
\address{Department of Computer Science -- UFRJ -- Rio de Janeiro -- Brazil} 

\author{Roberto Ierusalimschy}
\ead{roberto@inf.puc-rio.br}
\address{Department of Computer Science -- PUC-Rio -- Rio de Janeiro -- Brazil} 

\begin{abstract}
Most scripting languages nowadays use
regex pattern-matching libraries.
These regex libraries borrow the syntax of regular expressions, but
have an informal semantics that is different from the semantics of
regular expressions, removing the commutativity of alternation and
adding ad-hoc extensions that cannot be expressed by formalisms for
efficient recognition of regular languages, such as deterministic
finite automata.

Parsing Expression Grammars are a formalism that can describe all
deterministic context-free languages and has a simple computational model.
In this paper, we present a  formalization of
regexes via transformation to Parsing Expression Grammars.
The proposed transformation easily accommodates
several of the common regex extensions, giving a formal meaning to them.
It also provides a clear computational model that
helps to estimate the efficiency of regex-based matchers, and a basis for
specifying provably correct optimizations for them.
\end{abstract}

\begin{keyword}
regular expressions \sep
parsing expression grammars \sep
natural semantics \sep
pattern matching \sep
regexes
\end{keyword}

\end{frontmatter}

\section{Introduction}
Regular expressions are a concise way for describing regular languages
with an algebraic notation. Their syntax has seen wide use in pattern
matching libraries for programming languages, where they are used to
specify the pattern against which a user is trying to match a string,
or, more commonly, the pattern that a user is searching for in a
string. 

Regular expressions used for pattern matching are known as
{\em regexes}~\cite{apoc5,synops5}, and while they look like regular
expressions they often have different semantics, based on how the
pattern matching libraries that use them are actually implemented.

A simple example that shows this semantic difference are the regular expressions
$a | aa$ and $aa | a$, which both describe the language $\{ \Ch{a}, \Ch{aa}
\}$. It is trivial to prove that the $|$ operator of regular
expressions is commutative given its common semantics as the union of
sets. But the {\em regexes} $a | aa$ and $aa | a$ behave
differently for some implementations of pattern matching libraries and
some subjects.

The standard regex libraries of the Perl and Ruby languages,
as well as PCRE~\cite{pcre}, a regex library with bindings for many
programming languages, give different results when matching these two
regexes against the subject {\tt aa}. In all three libraries the
first regex matches just the first {\tt a} of the subject, while the
second regex matches the whole subject. With the subject {\tt ab} both regexes
give the same answer in all three libraries, matching the first {\tt
  a}, but we can see that the $|$ operator for regexes is not
commutative.

This behavior of regexes is directly linked to the way they are usually
implemented, by trying the alternatives in a $|$ expression in the
order they appear and backtracking when a particular path
through the expression makes the match fail. 

A naive implementation of regex matching via backtracking can have exponential worst-case running time, which
implementations try to avoid through ad-hoc optimizations to cut the
amount of backtracking that needs to be done for common patterns. These ad-hoc
optimizations lead to implementations not having a cost model of their
operation, which makes it difficult for users to determine the performance of regex
patterns. Simple modifications can make the time complexity of a pattern
go from linear to exponential in unpredictable ways~\cite{rewbr,nphard}.

Regexes can also have syntactical and semantical extensions that are
difficult, or even impossible, to express through pure regular
expressions. These extensions do not have a formal model, but are
informally specified through how they modify the behavior of an
implementation based on backtracking. The meaning of regex patterns
that use the extensions may vary among different regex libraries~\cite{friedl:regex}, or even
among different implementations of the same regex library~\cite{fowler:posix}.

Practical regex libraries try to solve performance problems with
ad-hoc optimizations for common patterns, but this makes the
implementation of a regex library a complex task, and is another
source of unpredictable performance, as different implementations can
have different performance characteristics.

A heavily optimized regex engine, RE2~\cite{regwild}, uses an implementation based on
finite automata and guarantees linear time performance, but it relies
on ad-hoc optimizations to handle more complex patterns, as a naive
automata-based implementation can have quadratic
behavior~\cite{maxmunch}. More importantly, it cannot implement some
 common regex extensions~\cite{regwild}.

Parsing Expression Grammars (PEGs)~\cite{ford:peg} are a formalism
that can express all deterministic context-free languages, which means
that PEGs can also express all regular languages. The syntax of PEGs
is based on the syntax of regular expressions and regexes, and PEGs have a
formal semantics based on {\em ordered choice}, a controlled form of
backtracking that, like the $|$ operation of
regexes, is sensitive to the ordering of the alternatives. 

We believe that ordered choice makes PEGs a suitable base for a formal
treatment of regexes, and show, in this paper, that we can describe
the meaning of regex patterns by conversion to PEGs. Moreover, PEGs
can be efficiently executed by a parsing machine that has a clear cost
model that we can use to reason
about the time complexity of matching a given
pattern~\cite{roberto:lpeg,dls:lpeg}. We can then use
the semantics of PEGs to reason about the behavior of regexes, for
example, to optimize pattern matching and searching by avoiding
excessive backtracking. We believe that the combination of the
regex to PEG conversion and the PEG parsing machine can be used to build
implementations of regex libraries that are simpler and easier to
extend than current ones.

The main contribution of this paper is our formalization of a simple,
structure-preserving translation from plain regular expressions to
PEGs that can be used to translate regexes to PEGs that match the same
subjects. We present a formalization of regular expressions as
patterns that match prefixes of strings instead of sets of strings,
using the framework of natural semantics~\cite{kahn:semantics}. In
this semantics, regular expressions are just a non-deterministic form
of the regexes used for pattern matching. We show that our semantics
is equivalent to the standard set-based semantics when we consider the
language of a pattern as the set of prefixes that it matches.

We then present a formalization of PEGs in the same style, and use
it to show the similarities and differences between regular
expressions, regexes, and PEGs. We then define a transformation that converts a
regular expression to a PEG, and prove its correctness. We also show
how we can improve the transformation for some classes of regexes by
exploiting their properties and the greater predictability and control
of performance that PEGs have, improving
the performance of the resulting PEGs. Finally, we show how our
transformation can be adapted to accommodate four regex extensions
that cannot be expressed by regular expressions:
independent expressions, possessive and lazy repetition, and lookahead.

There are procedures for transforming deterministic finite automata
and right-linear grammars to PEGs~\cite{roberto:lpeg,sergio:tese} and,
as there are transformations from regular expressions to these
formalisms, we could have used these existing procedures as the basis
of an implementation of regular expressions in PEG engines. But the
transformations of regular expressions to these formalisms cover just
a subset of regexes, not including common extensions, including those
covered in Section~\ref{sec:exts} of this paper. The direct
transformation we present here is straightforward and can cover regex extensions.

In the next section, we present our formalizations of regular expressions
and PEGs, and discuss when a regular expression has the same meaning
when interpreted as a regular expression and as a PEG, along with the
intuition behind our transformation. In Section~\ref{sec:equiv}, we
formalize our transformation from regular expressions to PEGs and
prove its correctness. In Section~\ref{sec:opt} we show how we can
reason about the performance of PEGs to improve the PEGs generated by
our transformation in some cases. In Section~\ref{sec:bench} we show
how our approach compares to existing regex implementations with some
benchmarks. In Section~\ref{sec:exts} we show
how our transformation can accommodate some regex extensions. Finally,
in Section~\ref{sec:conclusion} we discuss some related work and present our conclusions.

\section{Regular Expressions and PEGs}
\label{sec:re}

Given a finite alphabet $T$, we can define a regular expression
$\E$ inductively as follows, where $\Ch{a} \in T$, and both $e_1$ and $e_2$
are also regular expressions:
\begin{align*}
& \E \;=\; \Epsi \;\;\Bvert\;\;
          a \;\;\Bvert\;\; \Con{e_1}{e_2} \;\;\Bvert\;\;
          \Choice{e_1}{e_2} \;\;\Bvert\;\; e_1^*
\end{align*}

Traditionally, a regular expression can also be $\emptyset$, but we
will not consider it; $\emptyset$ is not used in regexes, and any
expression with $\emptyset$ as a subexpression can be either rewritten
without $\emptyset$ or is equal to $\emptyset$. 

Note that this definition  gives an abstract syntax for
expressions, and this abstract syntax is what we use in the formal
semantics and proofs. In our examples we use a concrete syntax that
assumes that juxtaposition (concatenation) is left-associative and has
higher precedence than $|$, which is also left-associative, while $*$ has the highest precedence, and
we will use parentheses for grouping when these precedence and
associativity rules get in the way.

The language of a regular expression $\E$, $L(\E)$, is traditionally
defined through operations on
sets. Intuitively, the languages of $\Epsi$
and $a$ are singleton sets with the corresponding symbols, the
language of $\Con{e_1}{e_2}$ is given by concatenating all strings of
$L(e_1)$ with all strings of $L(e_2)$, the language of
$\Choice{e_1}{e_2}$ is the union of the languages of $e_1$ and $e_2$,
and the language of $e_1^*$ is the Kleene closure of the language of
$e_1$, that is, $L^* = \bigcup_{i=0}^\infty L^i$ where $L^0 = \{ \Epsi
\}$ and $L^i = LL^{i-1}$ for $i > 0$~\cite[p. 28]{hopcroft:automata}. 

We are interested in a semantics for regexes, the kind of regular
expressions used for pattern matching and searching, so we will 
define a {\em matching} relation for regular expressions,
$\Lr$. Informally, we will have $\Reg{\E}{xy} \Lr y$ if and only if
the expression $\E$ matches the prefix $x$ of input string $xy$. 

\begin{figure}[t]
{\small
\begin{align*}
& \textbf{Empty String} \fivespaces
{\frac{}{\Reg{\Epsi\,}{x} \Lr \Just{x}}} \mylabel{empty.1}
\tenspaces
\textbf{Character} \fivespaces
{\frac{}{\Reg{a}{ax} \Lr \Just{x}}} \mylabel{char.1}  \\ \\
& \textbf{Concatenation} \fivespaces
{\frac{\Reg{e_1}{xyz} \Lr \Just{yz} \interf \Reg{e_2}{yz} \Lr \Just{z}}
{\Reg{\Con{e_1}{e_2}}{xyz} \Lr \Just{z}}} \mylabel{con.1} \\ \\
& \textbf{Choice} \tenspaces
{\frac{\Reg{e_1}{xy} \Lr \Just{y}}
	{\Reg{\Choice{e_1}{e_2}}{xy} \Lr \Just{y}}} \mylabel{choice.1} \tenspaces
{\frac{\Reg{e_2}{xy} \Lr \Just{y}}
	{\Reg{\Choice{e_1}{e_2}}{xy} \Lr \Just{y}}} \mylabel{choice.2} \\ \\
& \textbf{Repetition} \fivespaces
{\frac{}
	{\Reg{\E^*}{x} \Lr \Just{x}}} \mylabel{rep.1}  \fivespaces
{\frac{\Reg{\E}{xyz} \Lr \Just{yz} \interf \Reg{\E^*}{yz} \Lr \Just{z}}
	{\Reg{\E^*}{xyz} \Lr \Just{z}}} \mbox{ , } x \neq \Epsi \mylabel{rep.2} 
\end{align*}
}
\caption{Natural semantics of relation \,$\Lr$}
\label{fig:re}
\end{figure}

Formally, we define $\Lr$ via natural semantics, using the set of inference rules in Figure~\ref{fig:re}. We have $\Reg{e}{xy} \Lr y$ if and only if we can build a proof tree for this statement using the inference rules. The rules follow naturally from the expected behavior of each expression: rule {\bf empty.1} says that $\Epsi$ matches itself and does not consume the input; rule {\bf char.1} says that a symbol matches and consumes itself if it is the beginning of the input; rule {\bf con.1} says that a concatenation uses the suffix of the first match as the input for the next; rules {\bf choice.1} and {\bf choice.2} say that a choice can match the input using either option; finally, rules {\bf rep.1} and {\bf rep.2} say that a repetition can either match $\Epsi$ and not consume the input or match its subexpression and match the repetition again on the suffix that the subexpression left.

The following lemma proves that the set of strings that expression
$\E$ matches is the language of $\E$, that is, $L(\E) = \{ x \in T^* \
| \ \exists y \ \Reg{\E}{xy} \Lr y , y \in T^*\}$.

\begin{lemma}
\label{prop:equivre}
Given a regular expression $\E$
\,and a string $x$,
for any string $y$ we have that\,
$x \in L(\E)$ \,if and only if\,
$\Reg{\E}{xy} \Lr y$.
\end{lemma}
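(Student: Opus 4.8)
The plan is to prove both directions by structural induction on the regular expression $\E$, since both $L(\E)$ and the relation $\Lr$ are defined inductively on the structure of $\E$. For the forward direction ($x \in L(\E) \implies \Reg{\E}{xy} \Lr y$) the induction is on $\E$; for the backward direction ($\Reg{\E}{xy} \Lr y \implies x \in L(\E)$) it is cleaner to do induction on the height (or size) of the proof tree for $\Reg{\E}{xy} \Lr y$, because the subtrees of a derivation may mention the same expression (notably in the {\bf rep.2} rule). The base cases $\E = \Epsi$ and $\E = a$ are immediate: $L(\Epsi) = \{\Epsi\}$ and rule {\bf empty.1} matches exactly the strings whose consumed prefix is empty; similarly $L(a) = \{a\}$ and rule {\bf char.1} consumes exactly a leading $a$.

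For the inductive step I would handle each compound form in turn. For $\Con{e_1}{e_2}$: if $x \in L(\Con{e_1}{e_2})$ then $x = x_1 x_2$ with $x_1 \in L(e_1)$ and $x_2 \in L(e_2)$; applying the induction hypothesis to $e_1$ on input $x_1(x_2 y)$ and to $e_2$ on input $x_2 y$ gives the two premises of rule {\bf con.1}, yielding $\Reg{\Con{e_1}{e_2}}{xy} \Lr y$. The converse reads the premises of {\bf con.1} off the proof tree and applies the induction hypothesis to the two strictly smaller subtrees. The case $\Choice{e_1}{e_2}$ is a straightforward case split: $L(\Choice{e_1}{e_2}) = L(e_1) \cup L(e_2)$ matches the two rules {\bf choice.1} and {\bf choice.2} one-to-one. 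For $\E^*$: the forward direction needs an inner induction on the index $i$ with $x \in L(\E)^i$, using rule {\bf rep.1} when $i = 0$ and rule {\bf rep.2} for the step (after noting that any factor equal to $\Epsi$ can be dropped, so the side condition $x \neq \Epsi$ of {\bf rep.2} can always be met); the backward direction again inducts on the proof tree, with {\bf rep.1} giving $\Epsi \in L(\E)^0$ and {\bf rep.2} combining $x' \in L(\E)$ from the first subtree with $x'' \in L(\E)^*$ from the second.

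The main obstacle is the $\E^*$ case, and specifically the interaction of the side condition $x \neq \Epsi$ in rule {\bf rep.2} with the set-theoretic Kleene closure, which imposes no such restriction. In the forward direction one must argue that a decomposition of $x$ witnessing $x \in L(\E)^*$ can be chosen with no empty factors (collapse any $\Epsi$ factors), so that the derivation built with {\bf rep.2} is well-formed and terminates; in the backward direction the side condition is exactly what guarantees the first subtree consumed a nonempty prefix, so the recursion on the proof tree is on a structurally smaller object — this is where inducting on proof-tree height rather than on $\E$ pays off. The rest of the argument is routine bookkeeping about concatenation of strings and prefixes.
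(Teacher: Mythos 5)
Your proposal is correct and matches the paper's argument in all essentials: the forward direction rests on decomposing a member of the Kleene closure with a non-empty first factor so that \textbf{rep.2} applies, and the backward direction inducts on the height of the derivation, which is exactly how the paper handles the \textbf{rep.2} case. The only difference is organizational — the paper packages the forward direction as a single well-founded induction on the pair $\Tup{\E}{x}$ (ordered by subexpression first, then string length), whereas you use structural induction on $\E$ with an inner induction on the Kleene power index for the star case; these are interchangeable.
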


\begin{proof}
($\Rightarrow$): By induction on the complexity of the pair
$\Tup{\E}{x}$. Given the pairs $\Tup{e_1}{x_1}$ \,and\,
$\Tup{e_2}{x_2}$, the first pair is more complex than the second one
if and only if either $e_2$ is a proper subexpression of $e_1$ or $e_1
= e_2$ and $|x_1| > |x_2|$. The base cases are $\Tup{\Epsi}{\Epsi}$ and
$\Tup{a}{a}$, and their proofs follow by application of rules {\bf
  empty.1} and {\bf char.1}, respectively. Cases
$\Tup{\Con{e_1}{e_2}}{x}$ and $\Tup{\Choice{e_1}{e_2}}{x}$ use a
straightforward application of the induction hypothesis on the
subexpressions, followed by application of rule {\bf con.1} or one of
the {\bf choice} rules. Case $\Tup{e^*}{\Epsi}$ follows directly from
rule {\bf rep.1}, while for case $\Tup{e^*}{x}$, where $x \neq \Epsi$,
we know by the definition of the Kleene closure that $x \in L^i(e_1)$
with $i > 0$, where $L^i(e_1)$ is $L(e_1)$ concatenated with itself
$i$ times. This means that we can decompose $x$ into $x_1x_2$, with a
non-empty $x_1$, where
$x_1 \in L(e_1)$ and $x_2 \in L^{i-1}(e_1)$. Again by the definition
of the Kleene closure this means that $x_2 \in L(e_1^*)$. The proof
now follows by the induction hypothesis on $\Tup{e_1}{x_1}$ and
$\Tup{e_1^*}{x_2}$ and an application of rule {\bf rep.2}.

($\Leftarrow$): By induction on the height of the proof tree for $\Reg{e}{xy} \Lr y$. Most cases are straightforward; the interesting case is when the proof tree concludes with rule {\bf rep.2}. By the induction hypothesis we have that $x \in L(e_1)$ and $y \in L(e_1^*)$. By the definition of the Kleene closure we have that $y \in L^i(e_1)$, so $xy \in L^{i+1}(e_1)$ and, again by the Kleene closure, $xy \in L(e_1^*)$, which concludes the proof.
\end{proof}

Salomaa~\cite{salomaa} developed a complete axiom system for regular
expressions, where any valid equation involving regular expressions
can be derived from the axioms. The axioms of system $F_1$ are:

\begin{align}
\Chre{e_1}{(\Chre{e_2}{e_3})} & \Myeq \Chre{(\Chre{e_1}{e_2})}{e_3} \\
e_1(e_2e_3) & \Myeq (e_1e_2)e_3 \\
\Chre{e_1}{e_2} & \Myeq \Chre{e_2}{e_1} \\
e_1(\Chre{e_2}{e_3}) & \Myeq \Chre{e_1e_2}{e_1e_3} \\
(\Chre{e_1}{e_2})e_3 & \Myeq \Chre{e_1e_3}{e_2e_3} \\
\Chre{e}{e} & \Myeq e \\
\Epsi e & \Myeq e \\
\emptyset e & \Myeq \emptyset \\
\Chre{e}{\emptyset} & \Myeq e \\
e^* & \Myeq \Chre{\Epsi}{e^* e} \\
e^* & \Myeq (\Chre{\Epsi}{e})^*
\end{align}

Salomaa's regular expressions do not have the $\Epsi$ case; the
original axioms use $\emptyset^*$, which has the same meaning, as the
only possible proof trees for $\emptyset^*$ use {\bf rep.1}.
The following lemma shows that these axioms are valid under our
semantics for regular expressions, if we take $e_1 = e_2$ to mean that
$e_1$ and $e_2$ match the same sets of strings. 

\begin{lemma}
\label{lemma:salomaa}
For each of the axioms of $F_1$, if $l$ is the expression on the left
side and $r$ is the expression on the right side, we have that
$\Reg{l}{xy} \Lr y$ if and only if $\Reg{r}{xy} \Lr y$.
\end{lemma}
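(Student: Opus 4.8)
The plan is to reduce each of the eleven equations to the corresponding identity between the languages of the two sides, and then to verify that identity directly. Fix an axiom whose left-hand side is $l$ and whose right-hand side is $r$. By Lemma~\ref{prop:equivre}, for all strings $x$ and $y$ we have $\Reg{l}{xy} \Lr y$ exactly when $x \in L(l)$, and likewise $\Reg{r}{xy} \Lr y$ exactly when $x \in L(r)$. Hence it suffices to prove $L(l) = L(r)$ for each axiom; chaining the two biconditionals then yields $\Reg{l}{xy} \Lr y$ iff $\Reg{r}{xy} \Lr y$, which is the statement of the lemma.

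For axioms (1)--(7) the identity $L(l) = L(r)$ is immediate from the definition of $L$ together with elementary set-theoretic facts: associativity and commutativity of $\cup$ for (1) and (3), associativity of language concatenation for (2), distributivity of concatenation over $\cup$ for (4) and (5), idempotence of $\cup$ for (6), and $\{\Epsi\}$ being a left identity for concatenation for (7). For the two axioms that mention $\emptyset$, namely (8) and (9), observe that no inference rule of Figure~\ref{fig:re} has $\emptyset$ in its conclusion, so there is no $y$ with $\Reg{\emptyset}{xy}\Lr y$, i.e.\ $L(\emptyset) = \emptyset$; consequently $L(\emptyset\, e) = \emptyset \cdot L(e) = \emptyset = L(\emptyset)$ and $L(\Chre{e}{\emptyset}) = L(e) \cup \emptyset = L(e)$. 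This matches the remark above that $\emptyset^*$ can only be derived by rule \textbf{rep.1} and so behaves exactly like $\Epsi$.

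The only cases requiring a genuine, if short, argument are the star axioms (10) and (11). For (10) we reindex the Kleene closure: $L(e^*) = \bigcup_{i \geq 0} L(e)^i = \{\Epsi\} \cup \bigl(\bigcup_{i \geq 0} L(e)^i\bigr)L(e) = \{\Epsi\} \cup L(e^*)L(e) = L(\Chre{\Epsi}{e^* e})$, where the second equality uses that $\bigl(\bigcup_{i\geq 0} L(e)^i\bigr)L(e) = \bigcup_{i \geq 1} L(e)^i$. For (11), the inclusion $L(e^*) \subseteq L((\Chre{\Epsi}{e})^*)$ is immediate since $L(e) \subseteq \{\Epsi\} \cup L(e)$; for the reverse inclusion one shows by induction on $i$ that $(\{\Epsi\}\cup L(e))^i \subseteq L(e^*)$, using that each factor equal to $\Epsi$ can be dropped without changing the product, so the product lies in $L(e)^k$ for some $k \leq i$, hence in $L(e^*)$.

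I expect the star axioms to be the main obstacle, although even there the work is light; the only other point demanding care is giving $\emptyset$ its degenerate semantics explicitly, since $\emptyset$ is not part of the grammar of regular expressions used elsewhere in the paper. A more syntactic alternative would be to transform a proof tree for $\Reg{l}{xy}\Lr y$ directly into one for $\Reg{r}{xy}\Lr y$ and back; this is feasible but more tedious, and for (10) and (11) it again needs an induction on the height of the proof tree, so routing the argument through Lemma~\ref{prop:equivre} and the language identities is cleaner.
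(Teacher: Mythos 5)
Your proof is correct, but it takes a genuinely different route from the paper's. The paper argues entirely at the level of the operational semantics: for axioms 1--7 and 9 it rearranges the subtrees of a proof tree for one side into a proof tree for the other, axiom 8 is vacuously true, axiom 11 is a direct induction on the height of the proof trees, and axiom 10 requires first establishing the auxiliary identity $a^*a = a\,a^*$ by induction on proof-tree heights before the subtrees can be recombined via \textbf{rep.2}, \textbf{con.1} and \textbf{choice.2}. You instead route everything through Lemma~\ref{prop:equivre}, reducing each axiom to the classical set identity $L(l)=L(r)$, which is then either elementary set algebra or, for the star axioms, a short reindexing of (or induction over) the powers $L^i$ in the Kleene closure. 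Your version is shorter and offloads the work to standard facts about regular languages; the paper's version is self-contained at the proof-tree level and does not lean on the completeness direction of Lemma~\ref{prop:equivre}. One small point to tidy up: Lemma~\ref{prop:equivre} is stated and proved only for the paper's $\emptyset$-free syntax, so for axioms 8 and 9 you cannot literally invoke it for expressions containing $\emptyset$. Your direct observation that no inference rule concludes anything for $\emptyset$ already gives what you need there --- no proof tree exists for either side of axiom 8, and any proof tree for $\Chre{e}{\emptyset}$ must end in \textbf{choice.1} --- so it is cleaner to finish those two axioms at the level of proof trees rather than via the computations of $L(\emptyset e)$ and $L(\Chre{e}{\emptyset})$.
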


\begin{proof}
Trivially true for axiom 8, as there are no proof trees for either the
left or right sides of this axiom. For axioms 1 to 7 and for axiom 9
it is straightforward to use the subtrees of the proof tree of one
side to build a proof tree for the other side. We can prove the validity of axiom
11 by an straightforward induction
on the height of the proof trees for each side.

For axiom 10, we need to prove the identity $a^*a = a a^*$, by
induction on the heights of the proof trees. From this identity the
left side of axiom 10 follows by taking the subtrees of {\bf rep.2}
and combining them with {\bf con.1} into a tree for $a a^*$, which
means we have a tree for $a^* a$ that we can use to build a tree for
the right side using {\bf choice.2}. The right side follows from
getting a tree for $a a^*$ from a tree for $a^* a$ using the identity,
then taking its subtrees and using {\bf rep.2} to get a tree for $a^*$. 
\end{proof}

Parsing expression grammars borrow the syntax of regular
expressions. A {\em parsing expression} is also defined inductively,
extending the inductive definition of regular expressions with two new cases,
$A$ for a non-terminal, and $!e$ for a {\em not-predicate} of
expression $e$. A PEG $G$ is a tuple $(V, T, P, p_S)$ where $V$ is the set of
non-terminals, $T$ is the alphabet (set of terminals), $P$ is a
function from $V$ to parsing expressions, and $p_S$ is the parsing
expression that the PEG matches (its starting parsing expression). We will use the notation $G[p]$ for a
grammar derived from $G$ where $p_S$ is replaced by $p$ while keeping
$V$, $T$, and $P$ the same. We will refer to both regular expressions
and parsing expressions as just expressions, letting context
disambiguate between both kinds.

While the syntax of parsing expressions is similar to the syntax of regular
expressions, the behavior of the choice and repetition operators is
very different. Choice in PEGs is {\em ordered}; a PEG will only try
to match the right side of a choice if the left side cannot match any
prefix of the input. Repetition in PEGs is {\em possessive}; a repetition
will always consume as much of the input as it can
match, regardless of whether this leads to failure or a shorter match
for the whole pattern\footnote{Possessive repetition is a consequence of ordered choice,
  as $e^*$ is the same as expression $A$ where $A$ is a fresh
  non-terminal and $P(A) = \Choice{eA}{\Epsi}$.}. To formally define
ordered choice and possessive repetition we also need a way to express
that an expression does not match a prefix of the input, so we need to
introduce $\Nothing$ as a possible outcome of a match.

\begin{figure}[t]
{\scriptsize
\begin{align*}
& \textbf{Empty String} \fivespaces
{\frac{}{\Matg{\Epsi}{x} \Lp \Just{x}}} \mylabel{empty.1} 
\tenspaces
\textbf{Non-terminal} \fivespaces
{\frac{\Matg{P(A)}{x} \Lp \Just{\Anyx}}
	{\Matg{A}{x} \Lp \Just{\Anyx}}}    \mylabel{var.1}^+  \\ \\ 
& \textbf{Terminal} \;\;
{\frac{}{\Matg{a}{ax} \Lp \Just{x}}} \mylabel{char.1} \;\;
{\frac{}{\Matg{b}{ax} \Lp \Nothing}} \mbox{ , } b \neq a \mylabel{char.2}^+ \;\;
{\frac{}{\Matg{a}{\Epsi} \Lp \Nothing}} \mylabel{char.3}^+  \\ \\
& \textbf{Concatenation}
\fivespaces
{\frac{\Matg{p_1}{xy} \Lp \Just{y} \interf \Matg{p_2}{y} \Lp \Just{\Anyx}}
	{\Matg{\Con{p_1}{p_2}}{xy} \Lp \Just{\Anyx}}} \mylabel{con.1} \fivespaces
{\frac{\Matg{p_1}{x} \Lp \Nothing}
	{\Matg{\Con{p_1}{p_2}}{x} \Lp \Nothing}} \mylabel{con.2}^+ \\ \\
& \textbf{Ordered Choice} 
\;\;\;
{\frac{\Matg{p_1}{xy} \Lp \Just{y}}
	{\Matg{\Choice{p_1}{p_2}}{xy} \Lp \Just{y}}} \mylabel{choice.1} \;\;
{\frac{\Matg{p_1}{x} \Lp \Nothing \interf \Matg{p_2}{x} \Lp \Just{\Anyx}}
	{\Matg{\Choice{p_1}{p_2}}{x} \Lp \Just{\Anyx}}} \mylabel{choice.2}^* \\ \\ 
& \textbf{Repetition} \fivespaces
{\frac{\Matg{p}{x} \Lp \Nothing}
	{\Matg{p^*}{x} \Lp \Just{x}}} \mylabel{rep.1}^* \fivespaces
{\frac{\Matg{p}{xyz} \Lp \Just{yz} \interf \Matg{p^*}{yz} \Lp \Just{z}}
	{\Matg{p^*}{xyz} \Lp \Just{z}}} \mylabel{rep.2} \\ \\
& \textbf{Not Predicate} \tenspaces \fivespaces
{\frac{\Matg{p}{x} \Lp \Nothing} 
	{\Matg{!p}{x} \Lp \Just{x}}} \mylabel{not.1}^+ \tenspaces
{\frac{\Matg{p}{xy} \Lp \Just{y}}
	{\Matg{!p}{xy} \Lp \Nothing}} \mylabel{not.2}^+
\end{align*}
\caption{Definition of Relation $\Lp$ through Natural Semantics}
\label{fig:matchpeg}
}
\end{figure}

Figure~\ref{fig:matchpeg} gives the definition of $\Lp$, the matching
relation for PEGs. As with regular expressions, we say that
$\Mat{G}{xy} \Lp y$ to express that the grammar $G$ matches the prefix
$x$ of input string $xy$, and the set of strings that a PEG
matches is its language: $L(G) = \{ x \in T \ | \ \exists y \, \Mat{G}{xy} \Lp y,\ y
\in T^* \}$. 

We mark with a $*$ the rules that have been changed from Figure~\ref{fig:re}, and
mark with a $+$ the rules that were added. Unmarked rules are
unchanged from Figure~\ref{fig:re}, except for the trivial change of
adding the parameter $G$ to the relation. We have six new rules, and two
changed rules. New rules {\bf char.2} and {\bf char.3} generate
$\Nothing$ in the case that the expression cannot match the symbol in
the beginning of the input. New rule {\bf con.2} says that a
concatenation fails if its left side fails. New rule {\bf var.1} says
that to match a non-terminal we have to match the parsing expression
associated with it in this grammar's function from non-terminals to
parsing expressions (the non-terminal's ``right-hand side'' in the grammar). New rules
{\bf not.1} and {\bf not.2} say that a not predicate never consumes
input, but fails if its subexpression matches a prefix of the input. 

The change in rule {\bf con.1} is trivial and only serves to propagate
$\Nothing$, so we do not consider it an actual change. The changes to
rules {\bf choice.2} and {\bf rep.1} are what actually implements
ordered choice and possessive repetition, respectively. Rule {\bf
  choice.2} says that we can only match the right side of the choice
if the left side fails, while rule {\bf rep.1} says that a repetition
only stops if we try to match its subexpression and fail. 

It is easy to see that PEGs are deterministic; that is, a given PEG G
can only have a single result (either $\Nothing$ or a suffix of $x$)
for some input $x$, and only a single proof tree for this
result. If the PEG $G$ always yields a result for any input in $T^*$
then we say that $G$ is {\em complete}~\cite{ford:peg}. PEGs that are
not complete include any PEG that has left recursion and PEGs with
repetitions $e^*$ where $e$ matches the empty string. From now on we
will assume that any PEG we consider is complete unless stated
otherwise. The completeness of a PEG can be proved syntactically~\cite{ford:peg}.

The syntax of the expressions that form a PEG are a superset of the
syntax of regular expressions, so syntactically any regular expression
$e$ has a corresponding PEG $G_e = (V, T, P, e)$, where $V$ and $P$
can be anything. We can prove that $L(G_e) \subseteq L(\E)$ by a
simple induction on the height of proof trees for $\Mat{G_e}{xy} \Lp
y$, but it is easy to show examples where $L(G_e)$ is a proper subset
of $L(\E)$, so the regular expression and its corresponding PEG have different languages.

For example, expression $\Chre{a}{ab}$ has the language $\{ \Ch{a},\,\Ch{ab}\}$
as a regular expression but $\{ \Ch{a} \}$ as a PEG, because on an
input with prefix $\Ch{ab}$ the left side of the choice always matches
and keeps the right side from being tried. The same happens with
expression $\Con{a}{(\Chre{b}{bb})}$, which has language $\{
\Ch{ab},\, \Ch{abb} \}$ as a regular expression and $\{ \Ch{ab} \}$ as
a PEG, and on inputs with prefix $\Ch{abb}$ the left side of the
choice keeps the right side from matching.

The behavior of the PEGs actually match the behavior of the {\em
  regexes} $\Chre{a}{ab}$ and $\Con{a}{(\Chre{b}{bb})}$ on
Perl-compatible regex engines. These engines will always match
$\Chre{a}{ab}$ with just the first {\tt a} on subjects starting with
{\tt ab}, and always match $\Con{a}{(\Chre{b}{bb})}$ with {\tt ab} on
subjects starting with {\tt abb}, unless the order
of the alternatives is reversed. 

A different situation happens with expression
$\Con{(\Chre{a}{aa})}{b}$. As a regular expression, its language is
$\{ \Ch{ab},\, \Ch{aab} \}$ while as a PEG it is $\{ \Ch{ab} \}$, but
the PEG fails on an input with prefix $\Ch{aab}$. Regex engines will
backtrack and try the second alternative when $b$ fails to match the
second {\tt a}, and will also match {\tt aab}, highlighting the
difference between the unrestricted backtracking of common regex
implementations and PEGs' restricted backtracking.

If we take the previous regular expression, $\Con{(\Chre{a}{aa})}{b}$,
and distribute $b$ over the two alternatives, we have
$\Chre{ab}{aab}$. This expression now has the same language when we
interpret it either as a regular expression, as a PEG, or as a regex.

If we change $\Chre{a}{ab}$ and $\Con{a}{(\Chre{b}{bb})}$ so they have
the {\em prefix property}\footnote{There are no distinct strings $x$
  and $y$ in the language such that $x$ is a prefix of $y$.}, by
adding an end-of-input marker \$, we have the expressions
$\Con{(\Chre{a}{ab})}{\$}$ and
$\Con{a}{\Con{(\Chre{b}{bb})}{\$}}$. Now their languages as regular
expressions are $\{ \Ch{a\$},\, \Ch{ab\$} \}$ and $\{ \Ch{ab\$},\,
\Ch{abb\$} \}$, respectively, but the first expression fails as a PEG
on an input with prefix $\Ch{ab}$ and the second expression fails as a
PEG on an input with prefix $\Ch{abb}$.

Both $\Con{(\Chre{a}{ab})}{\$}$  and
$\Con{a}{\Con{(\Chre{b}{bb})}{\$}}$ match, as regexes, the same set of
strings that form their languages as regular expressions. They are in
the form $\Con{(\Chre{e_1}{e_2})}{e_3}$, like
$\Con{(\Chre{a}{aa})}{b}$, so we can distribute $e_3$ over the choice
to obtain $\Chre{\Con{e_1}{e_3}}{\Con{e_2}{e_3}}$. If we do that the
two expressions become $\Chre{a\$}{ab\$}$ and
$\Con{a}{(\Chre{b\$}{bb\$})}$, respectively, and they now have the
same language either as a regular expression, as a PEG, or as a regex.

We will say that a PEG $G$ and a regular expression $\E$ over the same
alphabet $T$ are {\em equivalent} if the following conditions hold for every input string $xy$:
\begin{eqnarray}
\Mat{G}{xy} \Lp y  & \Rightarrow & \Reg{\E}{xy} \Lr y \\
\Reg{\E}{xy} \Lr y & \Rightarrow & \Mat{G}{xy} \Nlp \Nothing
\end{eqnarray}

That is, a PEG $G$ and a regular expression $\E$ are equivalent if
$L(G) \subseteq L(\E)$ and $G$ does not fail for any string with a
prefix in $L(\E)$. In the examples above, regular expressions
$\Chre{a}{ab}$, $\Con{a}{(\Chre{b}{bb})}$, $\Chre{a\$}{ab\$}$,
$\Con{a}{(\Chre{b\$}{bb\$})}$, and $\Chre{ab}{aab}$ are all equivalent
with their corresponding PEGs, while $\Con{(\Chre{a}{ab})}{\$}$,
$\Con{a}{\Con{(\Chre{b}{bb})}{\$}}$, and $\Con{(\Chre{a}{aa})}{b}$ are
not.

Informally, a PEG and a regular expression will be equivalent if the
PEG matches the same strings as the regular expression when the
regular expression is viewed as a regex under the common
``leftmost-first'' semantics of Perl-compatible regex
implementations. If a regular expression can match different prefixes
of the same subject, due to the non-determinism of the choice and
repetition operations, the two conditions of equivalence guarantee
that an equivalent PEG will match one of those prefixes.

Regexes are deterministic, and will also match just a single prefix of
the possible prefixes a regular expression can match. Our
transformation will preserve the ordering among choices, so the prefix an
equivalent PEG obtained with our transformation matches will be the
same prefix a regex matches.

While equivalence is enough to guarantee that a PEG will give the same
results as a regex, equivalence together with the prefix property
yields the following lemma:

\begin{lemma}
\label{lemma:repeglang}
If a regular expression $\E$ with the prefix property and a PEG $G$ are equivalent then $L(G) = L(\E)$.
\end{lemma}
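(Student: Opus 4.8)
The plan is to show the two inclusions $L(G)\subseteq L(\E)$ and $L(\E)\subseteq L(G)$ separately, using the definition of equivalence together with the prefix property of $\E$. The inclusion $L(G)\subseteq L(\E)$ is immediate and requires no use of the prefix property: by the first equivalence condition, $\Mat{G}{xy}\Lp y$ implies $\Reg{\E}{xy}\Lr y$, so by Lemma~\ref{prop:equivre} any $x\in L(G)$ is also in $L(\E)$. The work is entirely in the reverse inclusion.

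For $L(\E)\subseteq L(G)$, I would take an arbitrary $x\in L(\E)$ and aim to show $x\in L(G)$. First, since $G$ is complete, $\Mat{G}{x}$ yields some result: either $\Nothing$, or a suffix $w$ of $x$. By Lemma~\ref{prop:equivre}, from $x\in L(\E)$ we get $\Reg{\E}{x}\Lr\Epsi$ (viewing $x$ as the whole input $xy$ with $y=\Epsi$), and then the second equivalence condition gives $\Mat{G}{x}\Nlp\Nothing$. So the result of $\Mat{G}{x}$ is a suffix $w$ of $x$; write $x=vw$, so that $v\in L(G)$. Applying $L(G)\subseteq L(\E)$ (just proved), we get $v\in L(\E)$ as well. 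Now both $v$ and $x=vw$ lie in $L(\E)$, and $v$ is a prefix of $x$; the prefix property forces $v=x$, hence $w=\Epsi$, which means $\Mat{G}{x}\Lp\Epsi$, i.e.\ $x\in L(G)$.

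The one step that needs a little care — and is the main thing to get right — is the appeal to the second equivalence condition with the empty tail: one must check that the condition, stated for every input string $xy$, legitimately applies when $y=\Epsi$ and the consumed prefix is all of $x$, so that $\Reg{\E}{x}\Lr\Epsi$ yields $\Mat{G}{x}\Nlp\Nothing$. This is exactly how equivalence is defined, so no difficulty arises, but it is the pivot of the argument: without it, completeness alone would not rule out $G$ failing on $x$, and the prefix property would have nothing to bite on. Everything else is bookkeeping with Lemma~\ref{prop:equivre} and the determinism of PEGs already noted in the text. I would also remark that the prefix property is genuinely needed: the examples $\Con{(\Chre{a}{ab})}{\$}$ and $\Con{(\Chre{a}{aa})}{b}$ in the preceding discussion show that without it one can have $L(G)\subsetneq L(\E)$ even when the two are "equivalent" in the weaker sense, so the lemma would fail.
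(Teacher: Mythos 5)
Your proposal is correct and follows essentially the same route as the paper's proof: the first equivalence condition gives $L(G)\subseteq L(\E)$, and the reverse inclusion combines the second condition with completeness to obtain a successful match, the first condition again to place the matched prefix in $L(\E)$, and the prefix property to force that prefix to be all of $x$. The only cosmetic difference is that you instantiate the tail as $\Epsi$ while the paper argues with an arbitrary $y$ and compares the two suffixes directly; the substance is identical.
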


\begin{proof}
As our first condition of equivalence says that $L(G) \subseteq L(\E)$, we just need to prove that $L(\E) \subseteq L(G)$. Suppose there is a string $x \in L(\E)$; this means that $\Reg{\E}{xy} \Lr y$ for any $y$. But from equivalence this means that $\Mat{G}{xy} \not\Lp \Nothing$. As $G$ is complete, we have $\Mat{G}{xy} \Lp y^\prime$. By equivalence, the prefix of $xy$ that $G$ matches is in $L(\E)$. Now $y$ cannot be a proper suffix of $y^\prime$ nor $y^\prime$ a proper suffix of $y$, or the prefix property would be violated. This means that $y^\prime = y$, and $x \in L(G)$, completing the proof.
\end{proof}

We can now present an overview on how we will transform a regular
expression $\E$ into an equivalent PEG. We first need to transform
subexpressions of the form $\Con{(\Chre{e_1}{e_2})}{e_3}$ to
$\Chre{e_1e_3}{e_2e_3}$. We do not need to actually duplicate $e_3$ in
both sides of the choice, potentially causing an explosion in the size of the resulting
expression, but can introduce a fresh non-terminal $X$ with $P(X) =
e_3$, and distribute $X$ to transform $\Con{(\Chre{e_1}{e_2})}{e_3}$
into $\Chre{e_1X}{e_2X}$.

Transforming repetition is trickier, but we just have to remember that
$e_1^* e_2 \equiv \Con{(\Choice{\Con{e_1}{e_1^*}}{\Epsi})}{e_2} \equiv
\Choice{(\Con{e_1}\Con{e_1^*}{e_2})}{e_2}$. Naively transforming the
first expression to the third does not work, as we end up with
$e_1^*e_2$ in the expression again, but we can add a fresh
non-terminal $A$ to the PEG with $P(A) = \Choice{e_1 A}{e_2}$ and then
replace $e_1^* e_2$ with $A$ in the original expression. The end
result of repeatedly applying these two transformation steps until we
reach a fixed point will be a parsing
expression that is equivalent to the original regular expression. 

As an example, let us consider the regular expression $b^*b\$$. Its
language is $\{ \Ch{b\$},\, \Ch{bb\$},\, \ldots \}$, but when
interpreted as a PEG the language is $\emptyset$, due to possessive
repetition. If we transform the original expression into a PEG with
starting parsing expression $A$ and $P(A) = \Choice{bA}{b\$}$, it will
have the same language as the original regular expression;
 for example, given the input \Ch{bb\$}, this PEG matches
the first \Ch{b} through subexpression $b$ of $\Con{b}{A}$,
and then $A$ tries to match the rest of the input, \Ch{b\$}. So, once more
subexpression $b$ of $\Con{b}{A}$ matches \Ch{b} and then $A$ tries to
match the rest of the input, $\Ch{\$}$. Since both $\Con{b}{A}$ and
$b\$$ fail to match $\Ch{\$}$, $A$ fails, and thus $\Con{b}{A}$ fails
for input $\Ch{b\$}$. Now we try $b\$$, which successfully matches $\Ch{b\$}$, and the
complete match succeeds.

If we now consider the regular expression $b^*b$, which has $\{
\Ch{b},\, \Ch{bb},\, \ldots \}$ as its language, we have that it also
has the empty set as its language if we interpret it as a PEG. A PEG
with starting parsing expression $A$ and $P(A) = \Choice{bA}{b}$ also
has $\{ \Ch{b},\, \Ch{bb},\, \ldots \}$ as its language, but with an
important difference: the regular expression can match just the first
{\tt b} of a subject starting with {\tt bb}, but the PEG will match
both, and any other ones that follow, so we do not have $\Reg{\E}{xy}
\Lr y$ implying $\Mat{G}{xy} \Lp y$. But the behavior of the PEG
corresponds the behavior of regex engines, which use {\em greedy}
repetition, where a repetition will always match as much as it can
while not making the rest of the expression fail.

The next section formalizes our transformation, and proves that for
any regular expression $e$ it will give a PEG that is equivalent to
$e$, that is, it will yield a PEG that recognizes the same language as $e$ if it has the prefix property.

\section{Transforming Regular Expressions to PEGs}
\label{sec:equiv}

This section presents function $\Pi$, a formalization of the transformation we outlined in the previous section. The function $\Pi$ transforms a regular expression $\E$ using a PEG $\Gk$ that is equivalent to a regular expression $\Ek$ to yield a PEG that is equivalent to the regular expression $\Con{\E}{\Ek}$.

The intuition behind $\Pi$ is that $\Gk$ is a {\em continuation} for the regular expression $\E$, being what should be matched after matching $\E$. We use this continuation when transforming choices and repetitions to do the transformations of the previous section; for a choice, the continuation is distributed to both sides of the choice. For a repetition, it is used as the right side for the new non-terminal, and the left side of this non-terminal is the transformation of the repetition's subexpression with the non-terminal as continuation.

For a concatenation, the transformation is the result of transforming the right side with $\Gk$ as continuation, then using this as continuation for transforming the left side. This lets the transformation of $\Con{(\Chre{e_1}{e_2})}{e_3}$ work as expected: we transform $e_3$ and then use the PEG as the continuation that we distribute over the choice. 

We can transform a standalone regular expression $\E$ by passing a PEG with $\Epsi$ as starting expression as the continuation; this gives us a PEG that is equivalent to the regular expression $\Con{\E}{\Epsi}$, or $\E$.
  
\begin{figure}[t]
{\small
\begin{align*}
& \Rp{\Epsi}{\Gk}  \; = \;  \Gk  \fivespaces
\Rp{a}{\Gk}  \; = \; \Peg{G_k}{\Con{a}{\Pk}}  \fivespaces
\Rp{\Con{e_1}{e_2}}{\Gk}  \; = \; \Rp{e_1}{\Rp{e_2}{\Gk}}  \\
& \Rp{\Choice{e_1}{e_2}}{\Gk}  \; = \;
     \Peg{G_2}{\Choice{p_1}{p_2}}  \mbox{, where} \,\,
      G_2 = \Grm{V_2}{T}{P_2}{p_2} = \Rp{e_2}{\Grm{V_1}{T}{P_1}{p_k}} \\
&  \tenspaces \tenspaces \tenspaces \tenspaces \tenspaces \fivespaces  \; \mbox{and} \;
 	 \;\;\, \Grm{V_1}{T}{P_1}{p_1} = \Rp{e_1}{\Gk} \\
& \Rp{e_1^*}{\Gk}  \; = \; G \;
		 \mbox{, where} \;\,
G \;=\; \Grm{V_1}{T}{P_1 \cup \{A \arrow \Chre{p_1}{\Pk}\}}{A} \mbox{ with }A \notin V_k \mbox{ and }\\
&		 \tenspaces \tenspaces \tenspaces \tenspaces  \!  
     \Grm{V_1}{T}{P_1}{p_1} \;=\; 
                            \Rp{e_1}{\Grm{V_k \cup \{A\}}{T}{P_k}{A}}
\end{align*}
\caption{Definition of Function $\Pi$, where $\Gk = \Grm{\Vk}{T}{P_k}{\Pk}$}
\label{fig:defpi}
}
\end{figure}

Figure~\ref{fig:defpi} has the definition of function $\Pi$. Notice
how repetition introduces a new non-terminal, and the transformation
of choice has to take this into account by using the set of
non-terminals and the productions of the result of transforming one
side to transform the other side, so there will be no overlap. Also
notice how we transform a repetition by transforming its body using
the repetition itself as a continuation (through the introduction of a
fresh non-terminal), then building a choice between the transformation
and the body and the continuation of the repetition. The
transformation process is bottom-up and right-to-left.

We will show the subtler points of transformation $\Pi$ with some
examples. In the following
discussion, we use the alphabet $T = \{\Ch{a},\,\Ch{b},\,\Ch{c}\}$,
and the continuation grammar $\Gk =
\Grm{\emptyset}{T}{\emptyset}{\Epsi}$ that is equivalent to the
regular expression $\Epsi$. In our first example, we use the
regular expression $\Con{(\Chre{a}{\Chre{b}{c}})^*}
    {\Con{a\,}
    {(\Chre{a}{\Chre{b}{c}})^*}}$, which matches an input that has at least
one \Ch{a}.

We first transform the second repetition by evaluating $\Rp{(\Chre{a}{\Chre{b}{c}})^*}{\Gk}$; we first transform $\Chre{a}{\Chre{b}{c}}$ with a new non-terminal $A$ as continuation, yielding the PEG $\Chre{aA}{\Chre{bA}{cA}}$, then combine it with $\Epsi$ to yield the PEG $A$ where $A$ has the production below:
\[
A \arrow\; \Choexe{\Con{a}{A}}
                {\Choexe{\Con{b}{A}}
                        {\Choexe{\Con{c}{A}}
                        {\Epsi}}}
\]

Next is the concatenation with $a$, yielding the PEG $aA$. We then use this PEG as continuation for transforming the first repetition. This transformation uses a new non-terminal $B$ as a continuation for transforming $\Chre{a}{\Chre{b}{c}}$, yielding $\Chre{aB}{\Chre{bB}{cB}}$, then combines it with $aA$ to yield the PEG $B$ with the productions below:

\[
B \arrow\; \Choexe{\Con{a}{B}}
                {\Choexe{\Con{b}{B}}
                        {\Choexe{\Con{c}{B}}
                        {\Con{a}{A}}}}
\tenspaces
A \arrow\; \Choexe{\Con{a}{A}}
                {\Choexe{\Con{b}{A}}
                        {\Choexe{\Con{c}{A}}
                        {\Epsi}}}
\]

When the original regular expression matches a given input, we do not
know how many \Ch{a}'s the first repetition matches, because the
semantics of regular expressions is non-deterministic. Regex
implementations commonly resolve ambiguities in repetitions
 by the longest match rule, where the first repetition will match all
 but the last \Ch{a} of the input. PEGs are deterministic by
 construction, and the PEG generated by $\Pi$ obeys the longest match
 rule. The alternative $\Con{a}{A}$ of non-terminal $B$ will only be
 tried if all the alternatives fail, which happens in the end of the
 input. The PEG then backtracks until the last \Ch{a} is found, where
 it matches the last \Ch{a} and proceeds with non-terminal $A$.

The regular expression $\Con{(\Chre{b}{c})^*}{\Con{a\,}
    {(\Chre{a}{\Chre{b}{c}})^*}}$ defines the same language as the
  regular expression of the first example, but without the
  ambiguity. Now $\Pi$ with continuation $\Gk$ yields the following PEG $B$:
\[
B \arrow\; \Choexe{\Con{b}{B}}
                {\Choexe{\Con{c}{B}}
                {\Con{a}{A}}}
\tenspaces
A \arrow\; \Choexe{\Con{a}{A}}
                {\Choexe{\Con{b}{A}}
                        {\Choexe{\Con{c}{A}}
                        {\Epsi}}}
\]

Although the productions of this PEG and the previous one match the
same strings, the second PEG is more efficient, as it will not have to
reach the end of the input and then backtrack until finding the last
\Ch{a}. This is an example on how we can use our semantics and the
transformation $\Pi$ to reason about the behavior of a regex. The
relative efficiency of the two PEGs is an artifact of the semantics,
while the relative efficiency of the two regexes depends on how a
particular engine is implemented. In a backtracking implementation it
will depend on what ad-hoc optimizations the implementation makes, in
an automata implementation they both will have the same relative
efficiency, at the expense of the implementation lacking the
expressive power of some regex extensions.

The expressions in the two previous examples are {\em well-formed}. A
regular expression $\E$ is well-formed if it does not have a
subexpression $e_i^*$ where $\Epsi \in L(e_i)$. If $\E$ is a
well-formed regular expression and $\Gk$ is a complete PEG then
$\Rp{\E}{\Gk}$ is also complete. In Section~\ref{sec:left} we will
show how to mechanically obtain a well-formed regular expression that
recognizes the same language as a non-well-formed regular expression
while preserving its overall structure.

We will now prove that our transformation $\Pi$ is correct, that is, if $\E$ is a well-formed regular expression and $\Gk$ is a PEG equivalent to a regular expression $\Ek$ then $\Rp{\E}{\Gk}$ is equivalent to $\Con{\E}{\Ek}$. The proofs use a small technical lemma: each production of PEG $\Gk$ is also in PEG $\Rp{\E}{\Gk}$, for any regular expression $\E$. This lemma is straightforward to prove by structural induction on $\E$.

We will prove each property necessary for equivalence separately; equivalence will then be a direct corollary of those two proofs. To prove the first property we need an auxiliary lemma that states that the continuation grammar is indeed a continuation, that is if the PEG $\Rp{\E}{\Gk}$ matches a prefix $x$ of a given input $xy$ then we can split $x$ into $v$ and $w$ with $x = vw$ and $\Gk$ matching $w$.

\begin{lemma}
\label{lemma:matchcon}
Given a well-formed regular expression $\E$, a PEG $\Gk$, and an input string $xy$, if $\Mat{\Rp{\E}{\Gk}}{xy} \Lp y$ then there is a suffix $w$ of $x$ such that \, $\Mat{\Gk}{wy} \Lp y$.
\end{lemma}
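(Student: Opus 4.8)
The plan is to argue by induction on the structure of $\E$, with a nested induction on derivation height in the repetition case. Two auxiliary observations will be used throughout. The first is the technical lemma already noted: every production of $\Gk$ occurs unchanged in $\Rp{\E}{\Gk}$, so $\Rp{\E}{\Gk}$ is obtained from $\Gk$ by adding fresh non-terminals with fresh productions. The second is a mild strengthening of it, again provable by structural induction on $\E$: the grammar $\Rp{\E}{\Gk}$ depends on the continuation $\Gk$ only through its starting expression (and, for the choice of fresh non-terminals, its set of non-terminals), not through its production function --- the productions of $\Gk$ are merely copied, and the one place a production is built from the continuation, namely the rule $A \arrow \Choice{p_1}{\Pk}$ added by repetition, refers to $\Pk$ but not to $P_k$. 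A handy consequence of the first observation is that if $p'$ is the starting expression of a sub-grammar $G''$ of some $\Pi$-generated grammar $G'$ (same alphabet, non-terminals of $G''$ included in those of $G'$ with the same productions), then a derivation of $\Mat{\Peg{G'}{p'}}{u} \Lp u'$ is also a derivation of $\Mat{G''}{u} \Lp u'$, because only non-terminals of $G''$ are reachable from $p'$.

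The non-repetition cases are direct. For $\E = \Epsi$ we have $\Rp{\Epsi}{\Gk} = \Gk$, so $w = x$ works. For $\E = a$ the starting expression is $\Con{a}{\Pk}$; a successful match decomposes via \textbf{con.1} into a match of $a$ against the first symbol of $xy$ --- forcing $x = aw$ --- and a match $\Mat{\Gk}{wy} \Lp y$, and $w$ is a suffix of $x$. For $\E = \Con{e_1}{e_2}$ we use $\Rp{\Con{e_1}{e_2}}{\Gk} = \Rp{e_1}{\Rp{e_2}{\Gk}}$: the induction hypothesis on $e_1$ gives a suffix $w_1$ of $x$ with $\Mat{\Rp{e_2}{\Gk}}{w_1 y} \Lp y$, and then the induction hypothesis on $e_2$ gives a suffix $w$ of $w_1$ --- hence of $x$ --- with $\Mat{\Gk}{wy} \Lp y$. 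For $\E = \Choice{e_1}{e_2}$ the match of the starting expression $\Choice{p_1}{p_2}$ ends with \textbf{choice.1} or \textbf{choice.2}; in the first case $p_1$, the start of $\Rp{e_1}{\Gk}$, matches inside the larger grammar, so by the consequence above it matches in $\Rp{e_1}{\Gk}$ and the induction hypothesis on $e_1$ finishes; in the second case $p_2$, the start of $\Rp{e_2}{\Grm{V_1}{T}{P_1}{\Pk}}$, matches, so the induction hypothesis on $e_2$ with that continuation yields a suffix $w$ of $x$ whose associated match, again by the consequence, is a match of $\Gk$.

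The repetition case $\E = e_1^*$ is the crux. Here $\Rp{e_1^*}{\Gk} = G$ with fresh starting non-terminal $A$, $P_G(A) = \Choice{p_1}{\Pk}$, and $\Grm{V_1}{T}{P_1}{p_1} = \Rp{e_1}{\Grm{\Vk \cup \{A\}}{T}{P_k}{A}}$. Since $e_1^*$ is well-formed, $\Epsi \notin L(e_1)$, which forces $p_1 \neq A$, so $G$ has no left recursion at $A$ and the derivation of $\Mat{G}{xy} \Lp y$ is a well-defined finite tree beginning with \textbf{var.1} (unfolding $A$ to $\Choice{p_1}{\Pk}$) and then \textbf{choice.1} or \textbf{choice.2}. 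If \textbf{choice.2}, then $\Mat{\Peg{G}{\Pk}}{xy} \Lp y$, which by the consequence is a match of $\Gk$, so $w = x$. If \textbf{choice.1}, then $\Mat{\Peg{G}{p_1}}{xy} \Lp y$; by the second auxiliary observation, $\Peg{G}{p_1}$ is exactly $\Rp{e_1}{G'}$ where $G'$ is the continuation grammar with starting expression $A$ and production function that of $G$. Applying the (outer) induction hypothesis to $e_1$ with this continuation --- and reading that hypothesis constructively, so it also exhibits the relevant sub-derivation --- yields a suffix $w'$ of $x$ together with a derivation of $\Mat{\Peg{G}{A}}{w'y} \Lp y$, i.e.\ of $\Mat{G}{w'y} \Lp y$, that occurs inside the derivation of $\Mat{\Peg{G}{p_1}}{xy} \Lp y$ and is therefore strictly smaller than the given derivation (which has \textbf{var.1} and \textbf{choice.1} stacked above it). The nested induction hypothesis applied to $\Mat{G}{w'y} \Lp y$ now gives a suffix $w$ of $w'$, hence of $x$, with $\Mat{\Gk}{wy} \Lp y$.

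I expect the repetition case to be the main obstacle, and within it two points deserve care: establishing the second auxiliary observation cleanly enough that $\Peg{G}{p_1}$ genuinely counts as $\Pi$ applied to the proper subexpression $e_1$ (so that the outer induction hypothesis is legitimately available), and pinning down that the derivation handed to the nested induction is strictly smaller than the given one. It is convenient to prove a slightly strengthened version of the lemma that also asserts its witnessing derivation to be a sub-derivation of the given one; this makes the nested induction immediate. The well-formedness hypothesis is what keeps the repetition from introducing left recursion, so that all derivations in play are finite and the induction is genuinely well-founded.
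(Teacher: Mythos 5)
Your proof is correct and follows essentially the same route as the paper's: a structural induction whose only delicate case is $e_1^*$, split on whether the derivation ends in \textbf{choice.1} or \textbf{choice.2}, with the former handled by applying the induction hypothesis first to $e_1$ (with the repetition grammar as continuation) and then again to the repetition itself on what remains. The only divergence is the secondary induction measure --- the paper descends on the length of the input, using well-formedness to argue that the continuation's match point is a \emph{proper} suffix of $x$, whereas you descend on derivation size via a sub-derivation strengthening --- and your explicit justification that $\Peg{G}{p_1}$ may legitimately be regarded as $\Pi$ applied to $e_1$ with the recursive continuation makes precise a step the paper leaves implicit.
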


\begin{proof}
By induction on the complexity of the pair $\Tup{\E}{xy}$. The interesting case is $e^*$. In this case $\Rp{e^*}{\Gk}$ gives us a grammar $G = \Grm{V_1}{T}{P}{A}$, \,where\, $A \arrow \Choice{p_1}{\Pk}$.
By {\bf var.1} we know that $\Mat{\Peg{G}{\Choice{p_1}{\Pk}}}{xy} \Lp y$. There are now two subcases to consider, {\bf choice.1} and {\bf choice.2}.

For subcase {\bf choice.2}, we have $\Matg{\Pk}{xy} \Lp y$. But then we have that $\Matk{\Pk}{xy} \Lp y$ because any non-terminal that $\Pk$ uses to match $xy$ is in both $G$ and $\Gk$ and has the same production in both. The string $xy$ is a suffix of itself, and $p_k$ is the starting expression of $\Gk$, closing this part of the proof.

For subcase {\bf choice.1} we have $\Mat{\Rp{e}{\Rp{e^*}{\Gk}}}{xy} \Lp y$,
and by the induction hypothesis $\Mat{\Rp{e^*}{\Gk}}{wy} \Lp y$.
We can now use the induction hypothesis again, on the length of the input, as $w$ must be a proper suffix of $x$. We conclude that $\Mat{\Gk}{\Wp y} \Lp y$ for a suffix $w^\prime$ of $w$, and so a suffix of $x$, ending the proof.
\end{proof}

The following lemma proves that if the first property of equivalence holds between a regular expression $\Ek$ and a PEG $\Gk$ then it will hold for $\Con{\E}{\Ek}$ and $\Rp{\E}{\Gk}$ given a regular expression $\E$.

\begin{lemma}
\label{lemma:reas}
Given two well-formed regular expressions $\E$ and $\Ek$ and a PEG $\Gk$, where $\Mat{\Gk}{wy} \Lp y \Rightarrow \Reg{\Ek}{wy} \Lr y$, if $\Mat{\Rp{\E}{\Gk}}{vwy} \Lp y$ then $\Reg{\Con{\E}{\Ek}}{vwy} \Lr y$.
\end{lemma}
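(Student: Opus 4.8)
The plan is to proceed by induction on the complexity of the pair $\Tup{\E}{vwy}$, using the same complexity ordering as in Lemma~\ref{prop:equivre} (proper-subexpression, or equal expression with strictly longer input). The hypothesis $\Mat{\Gk}{wy} \Lp y \Rightarrow \Reg{\Ek}{wy} \Lr y$ is carried along unchanged, since $\Ek$ and $\Gk$ stay fixed throughout the induction; only $\E$ and the input shrink. In each case I would peel off the outermost constructor of $\E$, inspect how $\Pi$ transformed it, read off which PEG rule must have concluded the derivation $\Mat{\Rp{\E}{\Gk}}{vwy} \Lp y$, apply the induction hypothesis to the subderivations, and then reassemble a proof tree for $\Reg{\Con{\E}{\Ek}}{vwy} \Lr y$ using the $\Lr$ rules together with the natural-semantics facts about concatenation.

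For the base cases: if $\E = \Epsi$ then $\Rp{\Epsi}{\Gk} = \Gk$, so $v$ is empty, $\Mat{\Gk}{wy}\Lp y$, the hypothesis gives $\Reg{\Ek}{wy}\Lr y$, and rule {\bf empty.1} plus {\bf con.1} yields $\Reg{\Con{\Epsi}{\Ek}}{wy}\Lr y$. If $\E = a$ then $\Rp{a}{\Gk} = \Peg{\Gk}{\Con{a}{\Pk}}$; the match must decompose $vwy$ as $a$ followed by a match of $\Pk$, so $v = a v'$, and $\Pk$ matches $v'wy$ down to $y$ inside $\Gk$; again the non-terminals used are shared between $\Peg{\Gk}{\Con{a}{\Pk}}$ and $\Gk$, so $\Mat{\Gk}{v'wy}\Lp y$, the hypothesis applies, and {\bf char.1} plus {\bf con.1} assemble the result. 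For concatenation $\Con{e_1}{e_2}$, note $\Rp{\Con{e_1}{e_2}}{\Gk} = \Rp{e_1}{\Rp{e_2}{\Gk}}$; I would first apply the induction hypothesis to $e_1$ with continuation $\Rp{e_2}{\Gk}$ (which is a PEG equivalent, by the inductive structure, to $\Con{e_2}{\Ek}$ — here I would invoke the induction hypothesis on $e_2$ to get the first-property hypothesis needed for the $e_1$ call, or, more cleanly, apply the hypothesis directly to $e_1$ with $\Rp{e_2}{\Gk}$ in the role of $\Gk$ and $\Con{e_2}{\Ek}$ in the role of $\Ek$), then use {\bf con.1} twice and associativity of concatenation of strings to conclude. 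For choice $\Choice{e_1}{e_2}$, the transformed PEG is $\Peg{G_2}{\Choice{p_1}{p_2}}$ and the match ends with {\bf choice.1} or {\bf choice.2}; in the first case the induction hypothesis on $e_1$ gives $\Reg{\Con{e_1}{\Ek}}{vwy}\Lr y$ and {\bf choice.1} for $\Lr$ plus distributivity gives $\Reg{\Con{(\Choice{e_1}{e_2})}{\Ek}}{vwy}\Lr y$; symmetrically for {\bf choice.2}, after noting that $p_1,p_2$ use disjoint non-terminal sets so the $e_2$-side match in $G_2$ is the same as in $\Rp{e_2}{\Gk}$.

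The main obstacle, as usual, is the repetition case $\E = e_1^*$. Here $\Rp{e_1^*}{\Gk} = \Grm{V_1}{T}{P_1\cup\{A\arrow\Choice{p_1}{\Pk}\}}{A}$, and $\Mat{\Peg{G}{A}}{vwy}\Lp y$ must come, via {\bf var.1}, from $\Mat{\Peg{G}{\Choice{p_1}{\Pk}}}{vwy}\Lp y$, which splits into subcases {\bf choice.1} and {\bf choice.2}. In subcase {\bf choice.2} the continuation $\Pk$ matches directly, $v$ is empty, and rule {\bf rep.1} for $\Lr$ plus {\bf con.1} finish things. In subcase {\bf choice.1} we have $\Mat{\Rp{e_1}{\Peg{G}{A}}}{vwy}\Lp y$, i.e.\ a match of the body with the repetition itself as continuation. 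The plan is to first apply Lemma~\ref{lemma:matchcon} (with $\Peg{G}{A}$ as the continuation grammar) to split $v = v_1 v'$ so that $\Peg{G}{A}$ matches $v'wy$ down to $y$; the body consumed the non-empty prefix $v_1$ (non-emptiness is where well-formedness of $e_1^*$ is used — the body of a well-formed starred expression cannot match $\Epsi$), so $v'wy$ is a strictly shorter input than $vwy$ and the induction hypothesis applies to the pair $\Tup{e_1^*}{v'wy}$, giving $\Reg{\Con{e_1^*}{\Ek}}{v'wy}\Lr y$, hence by distributing, in particular a $\Lr$-derivation we can feed to {\bf con.1}. Simultaneously, applying the induction hypothesis to $e_1$ with continuation $\Peg{G}{A}$ requires knowing the first-property hypothesis for $\Peg{G}{A}$ versus $\Con{e_1^*}{\Ek}$, which is exactly what we are in the middle of proving — so the argument must be structured so that the body-call uses the induction hypothesis at the pair $\Tup{e_1}{v_1 v'wy}$ and produces $\Reg{\Con{e_1}{(\Con{e_1^*}{\Ek})}}{v_1 v'wy}\Lr y$ directly, with the inner continuation claim supplied by the strictly-shorter-input instance just obtained. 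Finally, rule {\bf rep.2} for $\Lr$ (legitimate because $v_1\neq\Epsi$) combines the body match on $v_1$ and the repetition match on $v'$ into $\Reg{e_1^*}{v_1 v'wy}\Lr wy$, and one more {\bf con.1} with the $\Ek$-match on $wy$ gives $\Reg{\Con{e_1^*}{\Ek}}{vwy}\Lr y$, closing the induction. Care must be taken that the two uses of the induction hypothesis in this case are both at pairs strictly below $\Tup{e_1^*}{vwy}$ in the complexity order — the body call drops to a proper subexpression, and the recursive-repetition call keeps the same expression but strictly shortens the input because $v_1$ is non-empty.
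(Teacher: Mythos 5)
Your proposal is correct and follows essentially the same route as the paper: induction on the complexity of the pair $\Tup{\E}{vwy}$, with the repetition case resolved by invoking Lemma~\ref{lemma:matchcon} to split off a non-empty prefix consumed by the body (using well-formedness), applying the induction hypothesis first at $\Tup{e_1^*}{v'wy}$ to obtain the continuation fact and then at the subexpression $e_1$, and reassembling with the $\Lr$ rules. The only cosmetic difference is that you build the final derivation directly with {\bf rep.1}/{\bf rep.2} and {\bf con.1}, where the paper routes through the identity $\Choice{\Con{e}{\Con{e^*}{\Ek}}}{\Ek} \equiv \Con{e^*}{\Ek}$ and the {\bf choice} rules; these are interchangeable.
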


\begin{proof}
By induction on the complexity of the pair $\Tup{\E}{vwy}$. The interesting case is $e^*$. In this case, $\Rp{e^*}{\Gk}$ gives us a
PEG $G = \Grm{V_1}{T}{P}{A}$, where\, $A \arrow \Choice{p_1}{\Pk}$.
By {\bf var.1} we know that $\Matg{\Choice{p_1}{\Pk}}{vwy} \Lp y$. There are now two subcases, {\bf choice.1} and {\bf choice.2} of $\Lp$.

For subcase {\bf choice.2}, we can conclude that $\Gk \, vwy \Lp y$ because $p_k$ is the starting expression of $\Gk$ and any non-terminals it uses have the same production both in $G$ and $\Gk$. We now have $\Reg{\Ek}{vwy} \Lr y$. By {\bf choice.2} of $\Lr$ we have $\Reg{\Choice{\Con{e}{\Con{e^*}{\Ek}}}{\Ek}}{vwy} \Lr y$, but $\Choice{\Con{e}{\Con{e^*}{\Ek}}}{\Ek}
\equiv \Con{e^*}{\Ek}$, so $\Reg{\Con{e^*}{\Ek}}{vwy} \Lr y$, ending this part of the proof.

For subcase {\bf choice.1}, we have $\Mat{\Rp{e}{\Rp{e^*}{\Gk}}}{vwy} \Lp y$,
and by Lemma~\ref{lemma:matchcon} we have $\Mat{\Rp{e^*}{\Gk}}{wy} \Lp y$. The string $v$ is not empty, so we can use the induction hypothesis and Lemma~\ref{lemma:matchcon} again to conclude $\Reg{\Con{e^*}{\Ek}}{wy} \Lr y$. Then we use the induction hypothesis on $\Mat{\Rp{e}{\Rp{e^*}{\Gk}}}{vwy} \Lp y$ to conclude $\Reg{\Con{\E}{\Con{\E^*}{\Ek}}}{vwy} \Lr y$. We can now use rule {\bf choice.1} of $\Lr$ to get $\Reg{\Choice{\Con{e}{\Con{e^*}{\Ek}}}{\Ek}}{vwy} \Lr y$, but $\Choice{\Con{e}{\Con{e^*}{\Ek}}}{\Ek}
\equiv \Con{e^*}{\Ek}$, so $\Reg{\Con{e^*}{\Ek}}{vwy} \Lr y$, ending the proof.
\end{proof}

The following lemma proves that if the second property of equivalence holds between a regular expression $\Ek$ and a PEG $\Gk$ then it will hold for $\Con{\E}{\Ek}$ and $\Rp{\E}{\Gk}$ given a regular expression $\E$.

\begin{lemma}
\label{lemma:prop2}
Given well-formed regular expressions $\E$ and $\Ek$ and a PEG $\Gk$, where Lemma~\ref{lemma:reas} holds and we have $\Reg{\Ek}{wy} \Lr y \Rightarrow \Mat{\Gk}{wy} \Nlp \Nothing$, if $\Reg{\Con{\E}{\Ek}}{vwy} \Lr y$ then $\Mat{\Rp{\E}{\Gk}}{vwy} \Nlp \Nothing$.
\end{lemma}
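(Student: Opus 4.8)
The plan is to prove this by induction on the complexity of the pair $\Tup{\E}{vwy}$, using the same well-founded order as in Lemmas~\ref{lemma:matchcon} and~\ref{lemma:reas}: $\Tup{e}{x_1}$ outranks $\Tup{e^\prime}{x_2}$ when $e^\prime$ is a proper subexpression of $e$, or when $e = e^\prime$ and $|x_1| > |x_2|$. Since $\E$ is well-formed and $\Gk$ is complete (as we assume of every PEG), $\Rp{\E}{\Gk}$ is complete, so $\Mat{\Rp{\E}{\Gk}}{vwy}$ yields a definite result and it suffices to exclude $\Nothing$. In each case I would first rewrite $\Con{\E}{\Ek}$ with the equivalences of Lemma~\ref{lemma:salomaa} — associativity, distribution of concatenation over choice, and $e_1^* \equiv \Choice{\Epsi}{\Con{e_1}{e_1^*}}$ — and then invert the derivation of $\Reg{\Con{\E}{\Ek}}{vwy} \Lr y$ to expose how $vwy$ is split. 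The technical lemma that every production of $\Gk$ survives in $\Rp{\E}{\Gk}$ will be used throughout to transport a successful match of $\Pk$, or of the starting expression of a transformed subexpression, from the smaller grammar into the larger one.

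The base and non-recursive cases are routine. For $\E = \Epsi$ we have $\Rp{\Epsi}{\Gk} = \Gk$ and the conclusion is the given hypothesis. For $\E = a$ we have $\Rp{a}{\Gk} = \Peg{\Gk}{\Con{a}{\Pk}}$, and inverting {\bf con.1} and {\bf char.1} in $\Reg{\Con{a}{\Ek}}{vwy} \Lr y$ gives $\Reg{\Ek}{wy} \Lr y$ after the leading $\Ch{a}$, so by hypothesis and completeness $\Pk$ matches $wy$ and {\bf char.1}, {\bf con.1} give a match of $\Con{a}{\Pk}$. For $\E = \Con{e_1}{e_2}$, using $\Rp{\Con{e_1}{e_2}}{\Gk} = \Rp{e_1}{\Rp{e_2}{\Gk}}$ and $\Con{(\Con{e_1}{e_2})}{\Ek} \equiv \Con{e_1}{(\Con{e_2}{\Ek})}$, I would first observe that $\Rp{e_2}{\Gk}$ is equivalent to $\Con{e_2}{\Ek}$ — its first property is Lemma~\ref{lemma:reas} applied to $e_2$, its second the induction hypothesis at $\Tup{e_2}{\cdot}$ — and then apply the induction hypothesis at $\Tup{e_1}{vwy}$ with continuation $\Rp{e_2}{\Gk}$. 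For $\E = \Choice{e_1}{e_2}$, inverting $\Reg{\Choice{\Con{e_1}{\Ek}}{\Con{e_2}{\Ek}}}{vwy} \Lr y$ shows one disjunct holds; applying the induction hypothesis to its subexpression, with $\Gk$ (or the harmless extension of it that $\Pi$ threads through the other subexpression) as continuation, shows the corresponding branch $p_1$ or $p_2$ of the generated choice matches, and then {\bf choice.1} — or, if $p_1$ fails, {\bf choice.2} — finishes.

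The repetition case $\E = e_1^*$ is the heart of the argument. Here $\Rp{e_1^*}{\Gk}$ is a PEG $G$ with starting expression $A$ and production $A \arrow \Choice{p_1}{\Pk}$, and, since $\Pi$ never inspects $A$'s production, $\Pgg{p_1}$ equals $\Rp{e_1}{\Pgg{A}}$ up to the choice of fresh non-terminals. I would rewrite $\Con{e_1^*}{\Ek} \equiv \Choice{\Con{e_1}{(\Con{e_1^*}{\Ek})}}{\Ek}$ and invert. If the $\Ek$ disjunct matched, then $\Reg{\Ek}{vwy} \Lr y$, so by hypothesis and completeness $\Pk$ matches in $G$, hence $\Choice{p_1}{\Pk}$ matches (by {\bf choice.1} if $p_1$ does, else by {\bf choice.2}) and $A$ matches by {\bf var.1}. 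If the other disjunct matched, invert {\bf con.1}: since $e_1^*$ is well-formed, $\Epsi \notin L(e_1)$ by Lemma~\ref{prop:equivre}, so $e_1$ consumes at least one symbol, the remainder of $vwy$ on which $\Con{e_1^*}{\Ek}$ must match is strictly shorter, and more generally the transformation of $e_1$ reaches its continuation only on inputs strictly shorter than $vwy$. One then applies the induction hypothesis at $\Tup{e_1}{vwy}$ with continuation $\Pgg{A} = \Rp{e_1^*}{\Gk}$ and regular expression $\Con{e_1^*}{\Ek}$, obtaining $\Mat{\Rp{e_1}{\Pgg{A}}}{vwy} \Nlp \Nothing$, i.e., that $p_1$ matches $vwy$ in $G$; then $\Choice{p_1}{\Pk}$ matches by {\bf choice.1} and $A$ by {\bf var.1}.

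The step I expect to be the main obstacle is justifying that last use of the induction hypothesis. Invoking it on $e_1$ with continuation $\Rp{e_1^*}{\Gk}$ requires that continuation to be equivalent to $\Con{e_1^*}{\Ek}$: its first property is Lemma~\ref{lemma:reas} for $e_1^*$ (already proved), but its second property is exactly this lemma's own statement for $\E = e_1^*$, which is not yet available in full. The resolution is that the argument for $e_1$ appeals to the second property of $\Rp{e_1^*}{\Gk}$ only on inputs strictly shorter than $vwy$ — precisely because $\Epsi \notin L(e_1)$ forces every match of $e_1$ to consume a symbol before its continuation is reached — so those appeals are covered by the induction hypothesis at pairs $\Tup{e_1^*}{x}$ with $|x| < |vwy|$, keeping the induction well-founded. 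Making this bookkeeping explicit, together with the routine but fiddly identification of $\Pgg{p_1}$ with $\Rp{e_1}{\Pgg{A}}$ despite $A$ being bound only after $p_1$ is built, is where the care is needed; the rest is manipulation of the inference rules plus the shared-productions lemma.
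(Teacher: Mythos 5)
Your proof is correct and follows essentially the same route as the paper's: induction on the complexity of the pair $\Tup{\E}{vwy}$, with the repetition case handled by unfolding $\Con{e^*}{\Ek}$ into $\Choice{\Con{e}{\Con{e^*}{\Ek}}}{\Ek}$ and discharging the second equivalence property of the continuation $\Rp{e^*}{\Gk}$ via the induction hypothesis at strictly shorter inputs (a subtlety you make more explicit than the paper does). The only immaterial divergence is in the $\Ek$-disjunct subcase, where you case-split directly on whether $p_1$ succeeds or fails, while the paper assumes without loss of generality that the first disjunct has no proof tree and applies modus tollens of Lemma~\ref{lemma:reas} to conclude that $p_1$ fails before invoking \textbf{choice.2}.
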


\begin{proof}
By induction on the complexity of the pair $\Tup{\E}{vwy}$. The interesting case is $e^*$. We will use again the equivalence
$\Con{e^*}{\Ek} \equiv \Choice{\Con{e}{\Con{e^*}{\Ek}}}{\Ek}$.
There are two subcases, {\bf choice.1} and {\bf choice.2} of $\Lr$.

For subcase {\bf choice.1}, we have that $e$ matches a prefix of $vwy$ by rule {\bf con.1}. As $e^*$ is well-formed this prefix is not empty, so $\Reg{\Con{e^*}{\Ek}}{v^\prime wy} \Lr y$ for a proper suffix $v^\prime$ of $v$. By the induction hypothesis we have $\Mat{\Rp{e^*}{\Gk}}{v^\prime wy} \Nlp \Nothing$, and by induction hypothesis again we get $\Mat{\Rp{e}{\Rp{e^*}{\Gk}}}{vwy} \Nlp \Nothing$. This PEG is complete, so we can conclude $\Mat{\Peg{\Rp{e^*}{\Gk}}{\Choice{p_1}{\Pk}}}{vwy} \Nlp \Nothing$ using rule {\bf choice.1} of $\Lp$, and then $\Mat{\Rp{e^*}{\Gk}}{vwy} \Nlp \Nothing$ by rule {\bf var.1}, ending this part of the proof.

For subcase {\bf choice.2}, we can assume that there is no proof tree for the statement $\Reg{\Con{e}{\Con{e^*}{\Ek}}}{vwy} \Lr y$, or we could reduce this subcase to the first one by using {\bf choice.1} instead of {\bf choice.2}. Because $\Rp{e}{\Rp{e^*}{\Gk}}$ is complete we can use modus tollens of Lemma~\ref{lemma:reas} to conclude that $\Mat{\Rp{e}{\Rp{e^*}{\Gk}}}{vwy} \Lp \Nothing$. We also have $\Reg{\Ek}{vwy} \Lr y$, so $\Mat{\Gk}{vwy} \Nlp \Nothing$. Now we can use rule {\bf choice.2} of $\Lp$ to conclude $\Matg{\Choice{p_1}{\Pk}}{vwy} \Nlp \Nothing$, and then $\Mat{\Rp{e^*}{\Gk}}{vwy} \Nlp \Nothing$ by rule {\bf var.1}, ending the proof.
\end{proof}

The correctness lemma for $\Pi$ is a corollary of the two previous lemmas:

\begin{lemma}
\label{lemma:equiv}
Given well-formed regular expressions $\E$ and $\Ek$ and a PEG $\Gk$, where $\Ek$ and $\Gk$ are equivalent, then $\Rp{\E}{\Gk}$ and $\Con{\E}{\Ek}$ are equivalent.
\end{lemma}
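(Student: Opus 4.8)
The plan is to unfold the definition of equivalence and check that each of its two defining implications is supplied directly by one of the two preceding lemmas. Recall that a PEG $G$ and a regular expression $\E$ are equivalent precisely when, for every input $xy$, we have both $\Mat{G}{xy} \Lp y \Rightarrow \Reg{\E}{xy} \Lr y$ and $\Reg{\E}{xy} \Lr y \Rightarrow \Mat{G}{xy} \Nlp \Nothing$. So I must establish these two implications with $G := \Rp{\E}{\Gk}$ and the regular expression $\Con{\E}{\Ek}$, using as my only input the hypothesis that $\Ek$ and $\Gk$ are equivalent.

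First I would unpack that hypothesis: the equivalence of $\Ek$ and $\Gk$ is exactly the conjunction of $\Mat{\Gk}{wy} \Lp y \Rightarrow \Reg{\Ek}{wy} \Lr y$ and $\Reg{\Ek}{wy} \Lr y \Rightarrow \Mat{\Gk}{wy} \Nlp \Nothing$, for all $w$ and $y$. The first conjunct is verbatim the standing hypothesis required by Lemma~\ref{lemma:reas}, so that lemma applies (for the well-formed $\E$, $\Ek$) and yields that $\Mat{\Rp{\E}{\Gk}}{vwy} \Lp y$ implies $\Reg{\Con{\E}{\Ek}}{vwy} \Lr y$; instantiating the decomposition with $w := \Epsi$ so that $vwy$ ranges over all inputs, this is the first equivalence implication for $\Con{\E}{\Ek}$ and $\Rp{\E}{\Gk}$. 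Now that Lemma~\ref{lemma:reas} is known to hold for $\E, \Ek, \Gk$, both hypotheses of Lemma~\ref{lemma:prop2} are in place — Lemma~\ref{lemma:reas} itself, plus the second conjunct $\Reg{\Ek}{wy} \Lr y \Rightarrow \Mat{\Gk}{wy} \Nlp \Nothing$ — so Lemma~\ref{lemma:prop2} delivers that $\Reg{\Con{\E}{\Ek}}{vwy} \Lr y$ implies $\Mat{\Rp{\E}{\Gk}}{vwy} \Nlp \Nothing$; taking $w := \Epsi$ again gives the second equivalence implication. With both implications in hand, $\Rp{\E}{\Gk}$ and $\Con{\E}{\Ek}$ are equivalent by definition.

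Since every step is a direct citation of an earlier lemma, there is no substantive obstacle; the only point that needs a moment of care is the bookkeeping that converts the $vwy$-decomposed statements of Lemmas~\ref{lemma:reas} and~\ref{lemma:prop2} into the plain ``input $xy$'' form demanded by the definition of equivalence, which is handled by instantiating $w := \Epsi$ (equivalently, absorbing the whole matched prefix into $v$).
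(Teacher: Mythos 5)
Your proof is correct and follows essentially the same route as the paper's: the first equivalence property comes from the first property for $\Ek$ and $\Gk$ together with Lemma~\ref{lemma:reas}, and the second comes from Lemma~\ref{lemma:prop2}, whose hypotheses are discharged by the just-established Lemma~\ref{lemma:reas} and the second property for $\Ek$ and $\Gk$. The extra remark about instantiating the $vwy$ decomposition with $w := \Epsi$ is a harmless bookkeeping detail the paper leaves implicit.
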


\begin{proof}
The proof that first property of equivalence holds for $\Rp{\E}{\Gk}$ and $\Con{\E}{\Ek}$ follows from the first property of equivalence for $\Ek$ and  $\Gk$ plus Lemma~\ref{lemma:reas}. The proof that the second property of equivalence holds follows from the first property of equivalence for $\Rp{\E}{\Gk}$ and $\Con{\E}{\Ek}$, the second property of equivalence for $\Ek$ and $\Gk$, plus Lemma~\ref{lemma:prop2}.
\end{proof}

A corollary of the previous lemma combined with
Lemma~\ref{lemma:repeglang} is that $L(\Con{\E}{\$}) =
L(\Rp{\E}{\$})$, proving that our transformation can yield a PEG that
recognizes the same language as any well-formed regular expression
$\E$ just by using an end-of-input marker, even if the language of
$\E$ does not have the prefix property.

It is interesting to see whether the axioms of system $F_1$ (presented
on page 7) are still valid if we transform both sides using $\Pi$ with
$\Epsi$ as the continuation PEG, that is, if $l$ is the left side of
the equation and $r$ is the right side then $\Mat{\Pi(l,\Epsi)}{xy}
\Lp y$ if and only if $\Mat{\Pi(r,\Epsi)}{xy} \Lp y$. This is
straightforward for axioms 1, 2, 4, 6, 7, 8, and 9; in fact, it is
easy to prove that these axioms will be valid for any PEG, not just
PEGs obtained from our transformation.

Applying $\Pi$ to both sides of axiom 2, $e_1(e_2e_3)$ and
$(e_1e_2)e_3$, makes them identical; they both become
$\Pi(e_1,\Pi(e_2,\Pi(e_3,G_k)))$. The same thing happens with axiom 5,
$(\Chre{e_1}{e_2})e_3 = \Chre{e_1e_3}{e_2e_3}$; the transformation of
the left side, $\Pi((\Chre{e_1}{e_2})e_3, G_k)$, becomes
$\Chre{\Pi(e_1,\Pi(e_3,G_k))}{\Pi(e_2,\Pi(e_3,G_k))}$ via
the intermediate expression $\Pi(\Chre{e_1}{e_2},\Pi(e_3,G_k))$, while the transformation of the
right side, $\Pi(\Chre{e_1e_3}{e_2e_3},G_k)$, also becomes
$\Chre{\Pi(e_1,\Pi(e_3,G_k))}{\Pi(e_2,\Pi(e_3,G_k))}$, although via
the intermediate expression $\Chre{\Pi(e_1e_3,G_k)}{\Pi(e_2e_3,G_k)}$.

The transformation of axiom 3, $\Chre{e_1}{e_2} = \Chre{e_2}{e_1}$,
will not be valid; the left side becomes the PEG
$\Chre{\Pi(e_1,G_k)}{\Pi(e_2,G_k)}$ and the right side becomes the PEG
$\Chre{\Pi(e_2,G_k)}{\Pi(e_1,G_k)}$, but ordered choice is not
commutative in the general case. One case where this choice is
commutative is if the language of $\Chre{e_1}{e_2}$ has the prefix
property. We can use an argument analogous to the argument of
Lemma~\ref{lemma:repeglang} to prove this, which is not surprising, as
this lemma together with Lemma~\ref{lemma:salomaa} implies that this
axiom should hold for expressions with languages that have the prefix
property.

Axiom 10, $e^* = \Chre{\Epsi}{e^* e}$, needs to be rewritten as $e^* =
\Chre{e^* e}{\Epsi}$ or it is trivially not valid, as the right side
will always match just $\Epsi$. Again, this is not surprising, as
$\Chre{\Epsi}{e^* e}$ does not have the prefix property, and this is
the same behavior of regex implementations. Rewriting the axiom as $e^* =
\Chre{e^* e}{\Epsi}$ makes it valid when we apply $\Pi$ to both
sides, as long as $e^*$ is well-formed.

 The left side becomes the PEG $A$ where $A \rightarrow
\Chre{\Pi(e,A)}{\Epsi}$, while the right side becomes $\Chre{B}{\Epsi}$
where $B \rightarrow \Chre{\Pi(e,B)}{\Pi(e,\Epsi)}$. If $\Pi(e,A)$
fails it means that $\Pi(e,\Epsi)$ would also fail, and so do
$\Pi(e,B)$, then $B$ fails and the $\Chre{B}{\Epsi}$ succeeds by {\bf
  choice.2}. Analogous reasoning holds for the other side, if $B$
fails. If $\Pi(e,A)$ succeeds then $\Pi(e,\Epsi)$ matches a non-empty
prefix, and $A$ matches the rest, and we can assume that
$\Chre{B}{\Epsi}$ matches this rest by induction on the length of the
matched string. We can use this to conclude that $\Chre{B}{\Epsi}$
also succeeds. Again, analogous reasoning holds for the converse.

The right side of axiom 11, $(\Chre{\Epsi}{e})^*$, is not well-formed,
and applying $\Pi$ to it would lead to a left-recursive PEG with no
possible proof tree. We still need to show that any regular expression can be made
well-formed without changing its language. This is the topic of the
next section, where we give a transformation that rewrites
non-well-formed repetitions so they are well-formed with minimal
changes to the structure of the original regular expression. Applying
this transformation to the right side of axiom 11 will make it
identical to the left side, making the axiom trivially valid.

\subsection{Transformation of Repetitions $\E^*$ where $\Epsi \in L(\E)$}
\label{sec:left}

A regular expression $\E$ that has a subexpression $e_i^*$ where $e_i$
can match the empty string is not well-formed. As $e_i$ can succeed
without consuming any input one outcome of $e_i^*$ is to stay in the
same place of the input indefinitely. Regex libraries that rely on
backtracking may enter an infinite loop with non-well-formed
expressions unless they take measures to avoid it, using ad-hoc
rules to detect and break the resulting infinite loops~\cite{perl:emptyrep}.

When $\E$ is not well-formed, the PEG we obtain through transformation $\Pi$ is not complete. A PEG that is not complete can make a PEG library enter an infinite loop. To show an example on how a non-well-formed regular expression leads to a PEG that is not complete, let us transform $\Con{(\Chre{a}{\Epsi})^*}{b}$ using $\Pi$. Using $\Epsi$ as continuation yields the following PEG $A$:
\[
A \,\arrow\,
\Choexe{\Con{a}{A}}{\Choexe{A}{b}}
\]

The PEG above is {\em left recursive}, so it is not complete. In fact, this PEG does not have a proof tree for any input, so it is not equivalent to the regular expression $\Con{(\Chre{a}{\Epsi})^*}{b}$.

Transformation $\Pi$ is not correct for non-well-formed regular expressions, but we can make any non-well-formed regular expression well-formed by rewriting repetitions $e_i^*$ where $\Epsi \in L(e_i)$ as ${e_i^\prime}^*$ where $\Epsi \not\in L(e_i^\prime)$ and $L({e_i^\prime}^*) = L(e_i^*)$. The regular expression above would become $\Chre{a^*}{b}$, which $\Pi$ transforms into an equivalent complete PEG. 

This section presents a transformation that mechanically performs this
rewriting. We use a pair of functions to rewrite an expression,
$\Fout$ and $\Fin$. Function $\Fout$ recursively searches for a
repetition that has $\Epsi$ in the language of its subexpression,
while $\Fin$ rewrites the repetition's subexpression so it is
well-formed, does not have $\Epsi$ in its language, and does not
change the language of the repetition. Both $\Fin$ and $\Fout$ use two
auxiliary predicates, $\Isnull$ and $\Hase$, that respectively test if
an expression is equal to $\Epsi$ (if its language is the singleton
set $\{ \Epsi \}$) and if an expression has $\Epsi$ in
its language. Figure~\ref{fig:isnull} has inductive definitions for the $\Isnull$ and $\Hase$ predicates. 

\begin{figure}[t]
{\small
\begin{eqnarray*}
\Isnull(\Epsi) & = & \True \\
 \Isnull(a) & = & \False \\
 \Isnull(e^*) &  = & \Isnull(e)  \\
\Isnull(\Con{e_1}{e_2}) & =  &\Land{\Isnull(e_1)}{\Isnull(e_2)} \\
\Isnull(\Choice{e_1}{e_2}) & = & \Land{\Isnull(e_1)}{\Isnull(e_2)} \\
& & \\
\Hase(\Epsi) & = & \True \\
 \Hase(a) & = & \False \\
 \Hase(e^*) & = &  \True \\
\Hase(\Con{e_1}{e_2}) & = & \Land{\Hase(e_1)}{\Hase(e_2)} \\
\Hase(\Choice{e_1}{e_2}) & = & \Lor{\Hase(e_1)}{\Hase(e_2)}
\end{eqnarray*}
\caption{Definition of predicates $\Isnull$ and $\Hase$}
\label{fig:isnull}
}
\end{figure}

\begin{figure}[t]
\begin{align*}
\Fout(\E) & \Myeq \E \mbox{, if } e = \Epsi \mbox{ or } e = a \\
\Fout(\Con{e_1}{e_2}) & \Myeq
    \Con{\Fout(e_1)}{\Fout(e_2)} \\
\Fout(\Choice{e_1}{e_2}) & \Myeq
    \Choexe{\Fout(e_1)}{\Fout(e_2)} \\ 
\Fout(e^*) & \Myeq
\left\{ \begin{array}{ll}
	\Fout(e)^* & \mbox{if\, $\Nemp(e)$} \\
	\Epsi      & \mbox{if\, $\Isnull(e)$} \\
	\Fin(e)^* &  \mbox{otherwise}
\end{array}
\right.
\end{align*}
\caption{Definition of Function $\Fout$}
\label{fig:fout}
\end{figure}

Function $\Fout$ is simple: for the base expressions it is the
identity, for the composite expressions $\Fout$ applies itself
recursively to subexpressions unless the expression is a repetition
where the repetition's subexpression matches $\Epsi$. In this case
$\Fout$ transforms the repetition to $\Epsi$ if the subexpression is equal to $\Epsi$ (as $\Epsi^* \equiv \Epsi$), or uses $\Fin$ to rewrite the subexpression. Figure~\ref{fig:fout} has the inductive definition of $\Fout$. It obeys the following lemma:

\begin{lemma}
If $\Fin(\Ek)$ is well-formed, $\Epsi \not\in L(\Fin(\Ek))$, and $L(\Fin(\Ek)^*) = L(\Ek^*)$ for any $\Ek$ with $\Epsi \in L(\Ek)$ and $L(\Ek) \neq \Epsi$ then, for any $\E$, $\Fout(\E)$ is well-formed and $L(\E) = L(\Fout(\E))$.
\end{lemma}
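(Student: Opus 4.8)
The plan is to prove the lemma by structural induction on $\E$, mirroring the recursive definition of $\Fout$ in Figure~\ref{fig:fout}. The statement to be shown for each $\E$ is the conjunction: (i) $\Fout(\E)$ is well-formed, and (ii) $L(\E) = L(\Fout(\E))$. The hypothesis about $\Fin$ is assumed throughout, so the repetition case can invoke it as a black box. I would set up the induction with the base cases $\E = \Epsi$ and $\E = a$, where $\Fout$ is the identity, so both (i) and (ii) are immediate (a single terminal or $\Epsi$ has no repetition subexpression and is trivially well-formed).

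For the composite cases $\Con{e_1}{e_2}$ and $\Choice{e_1}{e_2}$, the argument is uniform: by the induction hypothesis, $\Fout(e_1)$ and $\Fout(e_2)$ are well-formed and satisfy $L(e_i) = L(\Fout(e_i))$. Well-formedness of $\Fout(e_1)$ and $\Fout(e_2)$ gives well-formedness of their concatenation (resp. choice), since a non-well-formed subexpression $e_i^*$ with $\Epsi \in L(e_i)$ would have to occur inside one of the two halves. For the language equality, I would use the fact (standard, or derivable from Lemma~\ref{prop:equivre}) that $L$ is compositional: $L(\Con{e_1}{e_2}) = L(e_1)L(e_2)$ and $L(\Choice{e_1}{e_2}) = L(e_1) \cup L(e_2)$, so substituting the induction hypotheses closes the case.

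The repetition case $\E = e^*$ splits into the three branches of the definition of $\Fout(e^*)$. If $\Nemp(e)$ (i.e.\ $\neg\Hase(e)$, so $\Epsi \notin L(e)$), then $\Fout(e^*) = \Fout(e)^*$; by the induction hypothesis $\Fout(e)$ is well-formed with $L(\Fout(e)) = L(e)$, and since $\Epsi \notin L(e) = L(\Fout(e))$ the repetition $\Fout(e)^*$ is well-formed, while $L(\Fout(e)^*) = L(\Fout(e))^* = L(e)^* = L(e^*)$ by the compositionality of Kleene closure. (Here one should note that the predicate $\Hase$ is sound and complete for ``$\Epsi \in L(e)$'', which follows by a separate easy structural induction on $e$ using Figure~\ref{fig:isnull}; I would state this as a preliminary remark.) If $\Isnull(e)$, then $\Fout(e^*) = \Epsi$, which is well-formed, and we need $L(e^*) = L(\Epsi) = \{\Epsi\}$; this holds because $\Isnull(e)$ being true means $L(e) = \{\Epsi\}$ (again a routine induction on Figure~\ref{fig:isnull}), so $L(e^*) = \{\Epsi\}^* = \{\Epsi\}$. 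Otherwise — the remaining branch — we have $\Hase(e)$ true and $\Isnull(e)$ false, i.e.\ $\Epsi \in L(e)$ but $L(e) \neq \{\Epsi\}$, which is exactly the precondition under which the hypothesis about $\Fin$ applies; so $\Fin(e)$ is well-formed, $\Epsi \notin L(\Fin(e))$, and $L(\Fin(e)^*) = L(e^*)$, giving both (i) and (ii) directly.

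The main subtlety — not so much an obstacle as a gap to paper over carefully — is that $\Fout$ recurses into subexpressions \emph{only} in the first branch of the repetition clause ($\Fout(e)^*$ when $\Nemp(e)$), whereas in the ``otherwise'' branch it hands $e$ straight to $\Fin$ without an inductive call. So the induction hypothesis is \emph{not} used in that branch, and correctness there rests entirely on the assumed properties of $\Fin$; I would make explicit that this is precisely why the lemma is stated conditionally on $\Fin$. A second point to be careful about is the interaction between ``$\Isnull$'' and ``$\Hase$'' on the one hand and the semantic conditions ``$L(e) = \{\Epsi\}$'' and ``$\Epsi \in L(e)$'' on the other: the cleanest route is to prove, once and for all before the main induction, the two equivalences $\Isnull(e) \Leftrightarrow L(e) = \{\Epsi\}$ and $\Hase(e) \Leftrightarrow \Epsi \in L(e)$ by structural induction on $e$, after which the repetition case of the main lemma becomes a straightforward case split with no semantic reasoning about the predicates left to do.
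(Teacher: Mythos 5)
Your proposal is correct and follows essentially the same route as the paper: structural induction on $\E$, with the composite cases discharged by the induction hypothesis and the repetition case with $\Epsi \in L(e)$ discharged by the assumed properties of $\Fin$. The paper states this in one line; your version merely fills in the case analysis and adds the (sensible) preliminary observation that $\Isnull$ and $\Hase$ correctly decide $L(e) = \{\Epsi\}$ and $\Epsi \in L(e)$.
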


\begin{proof}
By structural induction on $\E$. Inductive cases follow directly from the induction hypothesis, except for $\E^*$ where $\Epsi \in L(\E)$, where it follows from the properties of $\Fin$.
\end{proof}

\begin{figure}[t]
\begin{align*}
\Fin(\Con{e_1}{e_2}) & \Myeq
    \Fin(\Choice{e_1}{e_2}) \\
\Fin(\Choice{e_1}{e_2}) & \Myeq
\left\{ \begin{array}{ll}
  \Fin(e_2)
    & \mbox{if\, $\Isnull(e_1)$\,
	  and\, $\Hase(e_2)$} \\
  \Fout(e_2)
    & \mbox{if\, $\Isnull(e_1)$\,
		and\, $\Nemp(e_2)$} \\
  \Fin(e_1)
    & \mbox{if\, $\Hase(e_1)$\,
    and\, $\Isnull(e_2)$} \\
  \Fout(e_1)
    & \mbox{if\, $\Nemp(e_1)$\,
    and\, $\Isnull(e_2)$} \\
  \Choexe{\Fout(e_1)}{\Fin(e_2)}
    & \mbox{if\, $\Nemp(e_1)$\,
		and\, $\Nnull(e_2)$} \\
  \Choexe{\Fin(e_1)}{\Fout(e_2)}
    & \mbox{if\, $\Nnull(e_1)$\,
		and\, $\Nemp(e_2)$} \\
  \Choexe{\Fin(e_1)}{\Fin(e_2)}
    & \mbox{otherwise}
\end{array}
\right. \\
\Fin(e^*) & \Myeq
\left\{ \begin{array}{ll}
	\Fin(e)
    & \mbox{if\, $\Hase(e)$} \\
  \Fout(e)
		& \mbox{otherwise}
\end{array}
\right.
\end{align*}
\caption{Definition of Function $\Fin(\E)$,
where $\Nnull(\E)$ \,and\, $\Hase(\E)$}
\label{fig:fin}
\end{figure}

Function $\Fin$ does the heavy lifting of the rewriting, it is used when $\Fout$ finds an expression $\E^*$ 
where $\Nnull(\E)$ \,and\, $\Hase(\E)$.
 If its argument is a repetition it throws away the repetition because
 it is superfluous. Then $\Fin$ applies $\Fout$ or itself to the
 subexpression depending on whether it matches $\Epsi$ or not. If the
 argument is a choice $\Fin$ throws away one of the sides if its equal
 to $\Epsi$, as it is superfluous because of the repetition, and
 rewrites the remaining side using $\Fout$ or $\Fin$ depending on
 whether it matches $\Epsi$ or not. In case both sides are not equal
 to $\Epsi$ $\Fin$ rewrites both. If the argument is a concatenation
 $\Fin$ rewrites it as a choice and applies itself to the
 choice. 

Transforming a concatenation into a choice obviously is not a
 valid transformation in the general case, but it is safe in the
 context of $\Fin$; $\Fin$ is working inside a repetition
 expression, and its argument has $\Epsi$ in its language, so we can
 use an identity involving languages and the Kleene
 closure that says $(AB)^* = (A \cup B)^*$ if $\Epsi \in A$ and $\Epsi \in B$.
Figure~\ref{fig:fin} has the inductive definition of $\Fin$. It obeys the following lemma:

\begin{lemma}
If $\Fout(\Ek)$ is well-formed and $L(\Fout(\Ek)) = L(\Ek)$ for any $\Ek$ then, for any $\E$ with $\Epsi \in L(\E)$ and $L(\E) \neq \{\Epsi\}$, $\Epsi \notin L(\Fin(\E))$, $L(\E^*) = L(\Fin(\E)^*)$, and $\Fin(\E)$ is well-formed.
\end{lemma}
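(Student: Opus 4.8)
The plan is to establish all three conclusions simultaneously by structural induction on $\E$, carrying the two standing hypotheses $\Hase(\E)$ and $\Nnull(\E)$ — equivalently $\Epsi\in L(\E)$ and $L(\E)\neq\{\Epsi\}$ — through the induction. There are no genuine base cases: $\Epsi$ violates $\Nnull$ and $a$ violates $\Hase$, so $\Fin$ is only ever reached at a concatenation, a choice, or a repetition. Before starting I would collect the tools: (i) the lemma's hypothesis that $\Fout(\Ek)$ is well-formed with $L(\Fout(\Ek)) = L(\Ek)$ for every $\Ek$; (ii) the (easily checked) soundness of the predicates of Figure~\ref{fig:isnull}, namely $\Isnull(e)\Leftrightarrow L(e)=\{\Epsi\}$ and $\Hase(e)\Leftrightarrow \Epsi\in L(e)$; (iii) the standard equalities $L(\Choice{e_1}{e_2}) = L(e_1)\cup L(e_2)$, $L(\Con{e_1}{e_2}) = L(e_1)L(e_2)$, $L(e^*) = L(e)^*$; and (iv) the Kleene-algebra facts $(A^*)^* = A^*$, $(\{\Epsi\}\cup A)^* = A^*$ and $(X\cup Y)^* = (X^*\cup Y^*)^*$, together with the identity $(AB)^* = (A\cup B)^*$ for $\Epsi\in A$, $\Epsi\in B$ already invoked in the text. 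I would also note once that well-formedness of a choice is well-formedness of both components — the rewrite never introduces a new starred subexpression — and that $\Epsi\notin L(\Choice{p}{q})$ iff $\Epsi\notin L(p)$ and $\Epsi\notin L(q)$.

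For $\E = e^*$: the standing hypothesis $\Nnull(e^*)$ forces $\Nnull(e)$; if $\Hase(e)$ then $e$ meets the induction hypothesis and $\Fin(e^*) = \Fin(e)$ inherits all three properties, the equation $L((e^*)^*) = L(\Fin(e)^*)$ following from $(L(e)^*)^* = L(e)^*$; otherwise $\Nemp(e)$, $\Fin(e^*) = \Fout(e)$, and the properties come from (i), again via $(L(e)^*)^* = L(e)^*$. For $\E = \Choice{e_1}{e_2}$ I would classify each of $e_1,e_2$ into one of three regimes: $\Isnull$ (language $\{\Epsi\}$), \emph{mixed} ($\Hase$ and $\Nnull$), or $\Nemp$. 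The standing hypothesis excludes the pairs (null,null) and (nemp,nemp), and one checks that the seven clauses of $\Fin$ on a choice partition the remaining seven regime pairs. For each clause I must verify two things. First, that every recursive $\Fin$ call lands on a subexpression in the \emph{mixed} regime so the induction hypothesis applies: e.g.\ in the clause $\Isnull(e_1)\wedge\Hase(e_2)$ the standing hypothesis forces $e_2$ mixed, and in the clause $\Nemp(e_1)\wedge\Nnull(e_2)$ it forces $\Hase(e_2)$, hence $e_2$ mixed. Second, the language equation, where the point to watch is that $\Fin$ preserves only the \emph{starred} language: whatever rewriting $p_i$ a clause applies to a side $e_i$ (either $\Fout(e_i)$ or $\Fin(e_i)$) satisfies $L(p_i^*) = L(e_i^*)$ — from (i) or from the induction hypothesis respectively — and so, for a clause emitting a choice, $(L(p_1)\cup L(p_2))^* = (L(p_1^*)\cup L(p_2^*))^* = (L(e_1^*)\cup L(e_2^*))^* = (L(e_1)\cup L(e_2))^*$ using $(X\cup Y)^* = (X^*\cup Y^*)^*$; a clause that discards one side instead uses $(\{\Epsi\}\cup A)^* = A^*$.

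The concatenation case $\E = \Con{e_1}{e_2}$ reduces to the choice case by unfolding $\Fin(\Con{e_1}{e_2}) = \Fin(\Choice{e_1}{e_2})$: from $\Hase(\Con{e_1}{e_2})$ I get $\Hase(e_1)\wedge\Hase(e_2)$, hence $\Hase(\Choice{e_1}{e_2})$, and from $\Nnull(\Con{e_1}{e_2})$ I get that $e_1,e_2$ are not both $\Isnull$ (else $L(\Con{e_1}{e_2}) = \{\Epsi\}$), hence $\Nnull(\Choice{e_1}{e_2})$; so the choice analysis applies verbatim, with all its recursive calls on $e_1,e_2$, subexpressions of $\Con{e_1}{e_2}$, keeping the induction structural. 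The only extra point is $L((\Con{e_1}{e_2})^*) = L((\Choice{e_1}{e_2})^*)$, which is exactly $(AB)^* = (A\cup B)^*$ with $A = L(e_1)$, $B = L(e_2)$, both containing $\Epsi$ — the identity that makes replacing a concatenation by a choice inside a repetition sound.

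The main obstacle is not a single deep step but the sustained bookkeeping in the choice clause: one must check, clause by clause, that the standing hypothesis simultaneously rules out the degenerate regime combinations and pins each argument of a recursive $\Fin$ into the mixed regime, and one must everywhere keep in mind that $\Fin$ preserves only the \emph{starred} language of its argument — which is why the identity $(X\cup Y)^* = (X^*\cup Y^*)^*$, not naive substitution of equal languages, is the workhorse of every clause that emits a choice. Everything else is routine; combined with the preceding lemma, a simultaneous induction then delivers the unconditional statements that $\Fout$ and $\Fin$ are language-preserving and yield well-formed output.
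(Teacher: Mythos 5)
Your proposal is correct and follows essentially the same route as the paper's proof: structural induction on $\E$, with the choice clauses that emit a choice handled via $(A \cup B)^* = (A^* \cup B^*)^*$ together with the induction hypothesis and the assumed properties of $\Fout$, and concatenation reduced to choice via $(AB)^* = (A\cup B)^*$ when $\Epsi$ is in both languages. Your version simply makes explicit the clause-by-clause regime bookkeeping and the structural-induction technicality for the concatenation case that the paper leaves implicit.
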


\begin{proof}
By structural induction on $\E$. Most cases follow directly from the induction hypothesis and the properties of $\Fout$. The subcases of choice where the result is also a choice use the Kleene closure property $(A \cup B)^* = (A^* \cup B^*)^*$ together with the induction hypothesis and the properties of $\Fout$. Concatenation becomes to a choice using the property mentioned above this lemma.
\end{proof}

As an example, let us use $\Fout$ \,and\, $\Fin$ 
to rewrite the regular expression
\,$(\Chre{\Con{b}{c}}{\Con{a^*}{(\Chre{d}{\Epsi})}})^*$\,
into a well-formed regular expression. We show the sequence of steps below:
\begin{align*}
\Fout((\Chre{\Con{b}{c}}{\Con{a^*}{(\Chre{d}{\Epsi})}})^*) & \Myeq 
(\Fin(\Chre{\Con{b}{c}}{\Con{a^*}{(\Chre{d}{\Epsi})}}))^* 
 \Myeq
(\Chre{\Fout(\Con{b}{c})}{\Fin(\Con{a^*}{(\Chre{d}{\Epsi})})})^* \\ 
& \Myeq
(\Chre{\Con{\Fout(b)}{\Fout(c)}}{(\Chre{\Fin(a^*)}{\Fin(\Chre{d}{\Epsi})})})^* \\
& \Myeq 
(\Chre{\Con{b}{c}}{(\Chre{\Fout(a)}{\Fout(d)})})^*
\Myeq
(\Chre{\Con{b}{c}}{(\Chre{a}{d})})^*
\end{align*}

The idea is for rewriting to be automated, and transparent to the user
of regex libraries based on our transformation, unless the user wants
to see how their expression can be simplified. Notice that just the
presence of $\Epsi$ inside a repetition does not mean that a regular
expression is not well-formed. The
\,$(\Chre{\Con{b}{c}}{\Con{a}{(\Chre{d}{\Epsi})}})^*$\, expression
looks very similar to the previous one, but is well-formed and
left unmodified by $\Fout$.

\section{Optimizing Search and Repetition}
\label{sec:opt}

A common application of regexes is to search for parts of a subject
that match some pattern, but our formal model of regular expressions
and PEGs is {\em anchored}, as our matches must start on the first
symbol of the subject instead of starting anywhere. It is easy to
build a PEG that will search for another PEG \Grm{V}{T}{P}{S}, though,
we just need to add a new non-terminal $S^\prime$ as starting pattern,
with $S^\prime \rightarrow \Choice{S}{\Pdot S^\prime}$ as a new
production, where $\Pdot$ is a shortcut for a regular expression that
matches any terminal. If trying to match $S$ from the beginning of the
subject fails, then the PEG skips the first symbol and tries again on
the second.

The search pattern works, but can be very inefficient if the PEG
engine always has to use backtracking to implement ordered choice, as
advancing to the correct starting position may involve a large amount
of advancing and then backtracking. A related problem occurs when
converting regex repetition into PEGs, as the PEG generated from the
regular expression $e_1^* e_2$ will greedily try to consume as much of
the subject with $e_1$ as possible, then try $e_2$ and backtrack each
match of $e_1$ until $e_2$ succeeds or the whole pattern fails. In the
rest of this section we will show how we can use properties of the
expressions we are trying to search or match in conjunction with
syntactic predicates to reduce the amount of backtracking necessary in
both searches and repetition expressions.

\subsection{Search}

The search pattern for a PEG tries to match the PEG then advances one position in the subject and tries again if the match fails. A simple way to improve this is to advance several positions if possible, skipping starting positions that have no chance of a successful match. If we know that a successful match always consumes part of the subject and begins with a symbol in a set $\mathcal{F}$ then we can skip a failing starting position with the pattern $![\mathcal{F}] .$, where $[\mathcal{F}]$ is a $character set$ pattern that matches any symbol in the set. We can skip a string of failing symbols with the pattern $(![\mathcal{F}]\ \Pdot)^*$. The new search expression for the PEG with starting pattern $p$ can be written as follows:
\[
S \rightarrow (![\mathcal{F}]\ \Pdot)^* (\Choice{p}{\Pdot S})
\]

The set $\mathcal{F}$ for a pattern $p$ derived from a regular expression $e$ is just the $\Fst$ set of $e$, which has a simple definition in terms of the $\Lr$ relation below:
\[
\mathit{FIRST}(e) = \{ a \in T \, | \, \exists x,y \ e \, axy \Lr y, \
x, y \in T^*\}
\]

It is easy to prove that the two search expressions are equivalent by induction on the height of the corresponding derivation trees. The tricky case, where $(![\mathcal{F}]\ \Pdot)^*$, just uses the definition of $\Fst$ to build a tree of successive applications of rule {\bf ord.2} until we can use the induction hypothesis in its rightmost leaf.

\subsection{Repetition}

If two regular expressions $e_1$ and $e_2$ have disjoint $\Fst$ sets then it is safe to match $e_1^* e_2$ using possessive repetition. This means that we can transform $e_1^* e_2$ into the PEG $p_1^* p_2$ where $p_1$ and $p_2$ are the PEGs we get from transforming $e_1$ and $e_2$. Formally, we can define $\Rp{e_1^* e_2}{\Gk}$ when $\Fst(e_1) \cap \Fst(e_2) = \emptyset$ as follows, where we use $G_2[\varepsilon]$ as the continuation for transforming $e_1$ just to avoid collisions on the names of non-terminals:
\begin{align*}
 \Rp{e_1^* e_2}{\Gk}  & =  \Grm{V_1}{T}{P_1}{p_1^* p_2} \\
 \mbox{where }  \Grm{V_1}{T}{P_1}{p_1}  & =  \Rp{e_1}{G_2[\varepsilon]} \\
    \mbox{and } G_2 \;=\; \Grm{V_2}{T}{P_2}{p_2} & =  \Rp{e_2}{\Gk}
\end{align*}

The easiest way to prove the correctness of the new rule is by proving the equivalence of the PEGs we get from $\Rp{e_1^* e_2}{\Gk}$ using the old and new rule. This is a straightforward induction on the height of the proof trees for these PEGs, using the fact that disjoint $\Fst$ sets for $e_1$ and $e_2$ implies disjointedness of their equivalent PEGs.

In the general case, where the $\Fst$ sets of $e_1$ and $e_2$ are not disjoint, we can still avoid some amount of backtracking on $e_1^* e_2$ by being possessive whenever there is no chance of $e_2$ doing a successful match, as backtracking to a point where $e_2$ cannot match is useless. The idea is to use a predicated repetition of $p_1$ before doing the choice $\Choice{p_1 A}{p_2}$ that guarantees that the PEG will backtrack to a point where $p_2$ matches, if possible. We can use the $\Fst$ set of $e_2$ as an approximation to the set of possible matches of $e_2$, and the PEG for $e_1^* e_2$ becomes $A \rightarrow (![\Fst(e_2)]\ p_1)^* (\Choice{p_1 A}{p_2})$. The full rule for $\Rp{e_1^* e_2}{\Gk}$  becomes as follows:
\begin{align*}
 \Rp{e_1^* e_2}{\Gk}  =  \Grm{V_1 \cup \{A\}}{T}{P_1 \cup \{A \rightarrow (![\Fst(e_2)]\ p_1)^* (\Choice{p_1 A}{p_2})\}}{A} \\
 \mbox{where }  \Grm{V_1}{T}{P_1}{p_1}   = \Rp{e_1}{G_2[\varepsilon]} \\
\mbox{with } A \notin V_1 \mbox{ and }  G_2  = \Grm{V_2}{T}{P_2}{p_2} = 
                            \Rp{e_2}{\Gk}
\end{align*}

Again, the easiest way to prove that this new rule is correct is by proving the equivalence of the PEGs obtained from the old and the new rule, by induction on the height of the corresponding proof trees.

\subsection{Combining Search and Repetition}

We can further optimize the case where we want to search for the pattern $e_1^* e_2$ or $e_1 e_1^* e_2$ (we will use $e_1^+$ as a shorthand for $e_1 e_1^*$), and all strings in $L(e_1)$ have length one. We can safely skip the prefix of the subject that matches a possessive repetition of $e_1$ before trying again, because if the pattern would match from any of these positions then it would not have failed in the first place. We can combine this with our first search optimization to yield the following search pattern:
\[
S \rightarrow (![\mathcal{F}]\ \Pdot)^* (\Choice{p}{p_1^* S})
\]

In the pattern above, $p$ is the starting expression of a PEG equivalent to the regular expression we are searching and $p_1$ is the starting expression of a PEG equivalent to $e_1$ with an empty continuation. Set $\mathcal{F}$ is still the $\Fst$ set of the whole regular expression. If the $\Fst$ sets of $e_1$ and $e_2$ are disjoint we can further optimize our search by breaking up $p$ and using the following search expression to search for $e_1^* e_2$:
\[
S \rightarrow (![\mathcal{F}]\ \Pdot)^* p_1^* (\Choice{p_2}{S})
\]

The special case searching for $e_1^+ e_2$ just uses the search expression $S \rightarrow (![\mathcal{F}]\ \Pdot)^* p_1^+ (\Choice{p_2}{S})$. Proofs that these optimizations are correct are straightforward, by proving that these search expressions are equivalent to $S \rightarrow (![\mathcal{F}]\ \Pdot)^* (\Choice{p}{\Pdot S})$ by induction on the height of the derivation trees.

\section{Benchmarks}
\label{sec:bench}

This section presents some benchmarks that compare a regex engine
based on an implementation of our transformation with the resulting
PEGs executed with LPEG, a fast backtracking PEG
engine~\cite{roberto:lpeg}. We compare this engine with PCRE, a backtracking
regex engine that performs ad-hoc optimizations to cut the amount of
backtracking needed~\cite{pcre}, and with RE2, a non-backtracking
(automata-based) regex engine that nevertheless also incorporates
ad-hoc optimizations~\cite{regwild}.

We tested our search and repetition optimizations with a series of
benchmarks that search for the first successful match of a regular
expression inside a large subject, the Project Gutenberg version of
the King James Bible~\cite{bible}. Our first benchmark searches for a
single literal word in the subject, and serves as a simple test of the
search optimization. Table~\ref{tab:searchword} shows the results. We
can see that the optimization is very effective, as LPEG optimizes the
repetition in the search pattern to a single instruction of its
parsing machine that scans the subject checking each character against
a bitmap that encodes the character set. RE2 and PCRE use ad-hoc
optimizations to find the string and are still faster in some of the
cases~\cite{regwild}.

\begin{table}
{\small
\centering
\begin{tabular}{lrrrrr} \hline
Word                     & RE2 & PCRE & Unoptimized & Search &  Line  \\ \hline
\texttt{Geshurites}      &   1 &    1 &   12        &   1  &  19936 \\ 
\texttt{worshippeth}     &   3 &    3 &   25        &   4  &  42140 \\ 
\texttt{blotteth}        &   3 &    3 &   33        &   6  &  60005 \\ 
\texttt{sprang}          &   7 &    9 &   47        &  11  &  80000 \\ 
\hline 
\end{tabular}
\caption{Time in milliseconds to search for a word}
\label{tab:searchword}
}
\end{table}

Our second benchmark searches for two literal words in the same period (separated by letters, spaces or commas), and we test the search and repetition optimizations, but cannot apply the combined optimization of Section~4.3 because the expression does not have the necessary structure. Table~\ref{tab:twowords1order} shows the results, and we separate the optimizations to show the contribution of each one in the final result. The runtime is still dominated by having to find where in the subject the match is, so optimizing the repetition inside the pattern does not yield any gains. The pattern starts with a literal, so RE2 and PCRE are using ad-hoc optimizations to find where the match begins.

\begin{table}
{\small
\centering
\begin{tabular}{lrrrrrr} \hline
Words                      & RE2  & PCRE & Unopt & Search & Repetition & Line  \\ \hline
\texttt{Adam - Eve}        &   1  &   0  &    0  &     0  &       0    &   261 \\ 
\texttt{Israel - Samaria}  &   2  &   2  &   32  &     3  &       3    & 31144 \\ 
\texttt{Jesus - John}      &   2  &   3  &   73  &     6  &       6    & 76781 \\ 
\texttt{Jesus - Judas}     &   2  &   4  &   81  &     6  &       6    & 84614 \\ 
\texttt{Jude - Jesus}      &   3  &   4  &   94  &     7  &       7    & 98311 \\  
\texttt{Abraham - Jesus}   &   5  &   5  &   96  &     8  &       8    & no match  \\ \hline
\end{tabular}
\caption{Time in milliseconds to search for two words in the same period}
\label{tab:twowords1order}
}
\end{table}

The third benchmark searches for a literal word that follows any other word plus a single space (a regular expression $[a-zA-Z]^+\mbox{\textvisiblespace} w$, using character class notation and \textvisiblespace\ for the empty space symbol). This pattern falls in the case where the $\Fst$ sets of the repeated pattern and the pattern following the repetition are disjoint. We can apply the combined search and repetition optimization for this pattern, and compare it with the basic search and repetition optimizations. Table~\ref{tab:reppos} shows the results. Now even the unoptimized PEG defeats a backtracking regex matcher, but the DFA-based RE2 is much faster. The $\Fst$ set of the pattern includes most of terminals, and the search optimization is not effective. The biggest gain comes from the combined optimization, as it lets the PEG skip large portions of the subject in case of failure, yielding a result that is much closer to RE2.

\begin{table}
{\small
\centering
\begin{tabular}{lrrrrrrr} \hline
Word                     & RE2 & PCRE & Unopt & Search & Rep & Combined & Line  \\ \hline
\texttt{Geshurites}      &   5 &  74  &   57  &   59   &  38 &    8     &  19995 \\ 
\texttt{worshippeth}     &   7 & 156  &  121  &  126   &  86 &   18     &  42140 \\ 
\texttt{blotteth}        &  10 & 208  &  159  &  167   & 121 &   24     &  60005 \\ 
\texttt{sprang}          &  12 & 285  &  222  &  227   & 147 &   32     &  80000 \\ 
\hline 
\end{tabular}
\caption{Time in milliseconds to search for a word following another}
\label{tab:reppos}
}
\end{table}

The fourth and final benchmark extends the second benchmark by bracketing the pattern with a pattern that matches the part of the period that precedes and follows the two words we are searching, yielding the pattern $[a-zA-Z,\mbox{\textvisiblespace}]^*w_1[a-zA-Z,\mbox{\textvisiblespace}]^*w_2[a-zA-Z,\mbox{\textvisiblespace}]^*$. There is overlap in the $\Fst$ sets of $[a-zA-Z,\mbox{\textvisiblespace}]$, $w_1$, and $w_2$, so we need to use the more general form of the repetition optimization. We can also apply the combined optimization. As in the third benchmark, we compare this optimization with the basic search and repetition optimizations. Table~\ref{tab:semipos1} shows the results. The effect of the repetition optimization is bigger in this benchmark, but what brings the performance close to a DFA-based regex matcher, and much better than a backtracking regex matcher, is still the combined optimization.

\begin{table}
{\small
\centering
\begin{tabular}{lrrrrrrr} \hline
Words                        & RE2 & PCRE & Unopt & Search & Rep & Comb &  Line  \\ \hline
\texttt{Adam - Eve}          &   2 &    4 &    6  &    6   &   0 &    0     &    261 \\ 
\texttt{Israel - Samaria}    &   6 &  504 &  752  &  750   & 126 &    8     &  31144 \\ 
\texttt{Jesus - John}        &  12 & 1134 & 1710  & 1718   & 278 &   18     &  76781 \\ 
\texttt{Jesus - Judas}       &  13 & 1246 & 1884  & 1892   & 306 &   20     &  84614 \\ 
\texttt{Jude - Jesus}        &  15 & 1446 & 2176  & 2188   & 364 &   24     &  98311 \\  
\texttt{Abraham - Jesus}     &  15 & 1470 & 2214  & 2220   & 362 &   24     &  no match \\ \hline 
\end{tabular}
\caption{Time in milliseconds to search for a period containing two words}
\label{tab:semipos1}
}
\end{table}

Our benchmarks show that without optimizations our PEG-based engine
performs on par with PCRE on more complex patterns. The optimizations
bring it to a factor of 1 to 3 of the performance of RE2, a very
efficient and well-tuned regex implementation that cannot implement
common regex extensions due to its automata-based implementation approach.

\section{Transforming Regex Extensions}
\label{sec:exts}

Regexes add several ad-hoc extensions to regular expressions. We can
easily adapt transformation $\Pi$ to deal with some of these
extensions, and this section shows how to use $\Pi$ with independent expressions, possessive repetitions, lazy repetitions, and lookahead. An informal but broader discussion of regex extensions in the context of translation to PEGs was published by Oikawa
et al.~\cite{japa:sblp}.

\begin{figure}[t]
{
\begin{align*}
\Rp{\Eind}{\Gk}  & \Myeq   \Grm{V_1}{T}{P_1}{\Con{p_1}{\Pk}} 
		 \mbox{, where} \;\,
\Grm{V_1}{T}{P_1}{p_1} \Myeq \Rp{e_1}{\Gk[\Epsi]} \\ 
\Rp{\Epos}{\Gk} & \Myeq  \Rp{?\rangle e^*}{\Gk} \\ 
\Rp{\Elazy}{\Gk} & \Myeq G \;
		 \mbox{, where} \;\,
G \Myeq \Grm{V_1}{T}{P_1 \cup \{A \arrow \Choice{\Pk}{p_1}\}}{A} \;\mbox{,}\\
&		 \;\;\;\,  
     \Grm{V_1}{T}{P_1}{p_1} \Myeq 
                            \Rp{e_1}{\Grm{\Vk \cup \{A\}}{T}{P_k}{A}}, \;\mbox{and}\;\; A \notin \Vk \\
\Rp{\Enot}{\Gk}  & \Myeq   \Grm{V_1}{T}{P_1}{\Con{!p_1}{\Pk}} 
		 \mbox{, where} \;\,
\Grm{V_1}{T}{P_1}{p_1} \Myeq \Rp{e_1}{\Gk[\Epsi]} \\
\Rp{\Eand}{\Gk}  & \Myeq   \Grm{V_1}{T}{P_1}{\Con{!!p_1}{\Pk}} 
		 \mbox{, where} \;\,
\Grm{V_1}{T}{P_1}{p_1} \Myeq \Rp{e_1}{\Gk[\Epsi]} 
\end{align*}
\caption{Adapting Function $\Pi$ to Deal with Regex Extensions}
\label{fig:extpi}
}
\end{figure}

The regex $\Eind$ is an independent expression (also known as atomic
grouping). It matches independently of the expression that follows it,
so a failure when matching the expression that follows $\Eind$ does
not force a backtracking regex matcher to backtrack to $\Eind$'s
alternative matches. This is the same behavior as a PEG, so to
transform $\Eind$ we first transform it using an empty continuation,
then concatenate the result with the original continuation.

The regex $\Epos$ is a possessive repetition. It always matches as
most as possible of the input, even if this leads to a subsequent
failure. It is the same as $?\rangle e^*$ if the longest-match rule is
used. The semantics of $\Pi$ guarantees longest match, so it uses this
identity to transform $\Epos$.

The regex $\Elazy$ is a lazy repetition. It always matches as little
of the input as necessary for the rest of the expression to match
(shortest match). The transformation of this regex is very similar to
the transformation of $e_1^*$, we just flip $p_1$ and $p_k$ in the
production of non-terminal $A$. Now the PEG tries to match the rest of
the expression first, and will only try another step of the repetition
if the rest fails.

The regex $\Enot$ is a negative lookahead. The regex matcher tries to
match the subexpression; it it fails then the negative lookahead
succeeds without consuming any input, and if the subexpression
succeeds the negative lookahead fails. Negative lookahead is also an
independent expression. Transforming this regex is just a matter of
using PEGs negative lookahead, which works in the same way, on the
result of transforming the subexpression as an independent expression.

Finally, the regex $\Eand$ is a positive lookahead, where the regex
matcher tries to match the subexpression and fails if the
subexpression fails and succeeds if the subexpression succeeds, but
does not consume any input. It is also an independent expression. We
transform a positive lookahead by transforming the subexpression as an
independent expression and then using PEGs negative lookahead twice.

None of these extensions has been formalized before, as they depend on the
behavior of backtracking-based implementations of regexes instead of
the semantics of regular expressions. We decided to formalize them in
terms of their conversion to PEGs instead of trying to rework our
semantics of regular expressions to accommodate them, as these
extensions map naturally to concepts that are already part of the
semantics of PEGs.

\begin{figure}[t]
\begin{align*}
\Fout(\Eind) & \Myeq ?\rangle \Fout(e_1) \\
\Fout(\Epos) & \Myeq
\left\{ \begin{array}{ll}
	\Eposs{\Fout(e_1)} & \mbox{if\, $\Nemp(e_1)$} \\
	\Epsi      & \mbox{if\, $\Isnull(e_1)$} \\
	\Eposs{\Fin(e_1)} &  \mbox{otherwise}
\end{array}
\right. \\
\Fout(\Elazy) & \Myeq
\left\{ \begin{array}{ll}
	\Elazyy{\Fout(e_1)} & \mbox{if\, $\Nemp(e_1)$} \\
	\Epsi      & \mbox{if\, $\Isnull(e_1)$} \\
	\Elazyy{\Fin(e_1)} &  \mbox{otherwise}
\end{array}
\right. \\
\Fout(\Enot) & \Myeq \Enott{\Fout(e_1)} \\
\Fout(\Eand) & \Myeq \Eandd{\Fout(e_1)} \\
& \\
\Fin(\Eind) & \Myeq \Eindd{\Fin(e_1)} \\
\Fin(\Epos) & \Myeq
\left\{ \begin{array}{ll}
	\Fin(e_1)
    & \mbox{if\, $\Hase(e_1)$} \\
  \Fout(e_1)
		& \mbox{otherwise}
\end{array}
\right. \\
\Fin(\Elazy) & \Myeq
\left\{ \begin{array}{ll}
	\Fin(e_1)
    & \mbox{if\, $\Hase(e_1)$} \\
  \Fout(e_1)
		& \mbox{otherwise}
\end{array}
\right.
\end{align*}
\caption{Definition of Functions $\Fout$ and $\Fin$ for regex extensions}
\label{fig:foutfin}
\end{figure}

The well-formedness rewriting of Section~\ref{sec:left} needs to
accommodate the new extensions. It is not possible to rewrite all
non-well-formed expressions with these extensions while keeping their
behavior the same, as these extensions make it possible to write
expressions that cannot give a meaningful result,
such as $(\Eindd{(\Chre{\Epsi}{a})})^*$ or
$(\Eandd{a}(\Chre{d}{\Epsi}))^*$. Other expressions can work with some
subjects and not work with others, such as $(\Eindd{(\Chre{a}{\Chre{\Epsi}{b}})})^*$ or
$(\Eandd{a}(\Chre{a}{\Epsi}))^*$.

Our approach will be to rewrite problematic expressions so they give
the same result for the subjects where they do not cause problems, but
also give a result for other subjects, that is, they will match a
superset of the strings that the original expression matches. For example, the four
expressions above will be respectively rewritten to $(\Eindd{a})^*$,
$d^*$, $(\Eindd(\Chre{a}{b}))^*$, and $a^*$.

The $\Isnull$ predicate is $\True$ for $\Enot$ and $\Eand$
expressions,  and $\Isnull(e_1)$ for the other extensions. This is a conservative
definition, as expressions such as $?\rangle (\Chre{\Epsi}{e})$ can
also be replaced by $\Epsi$ given the informal semantics of
regexes. The $\Hase$ predicate is $\True$ for all the extensions except
$\Eind$, where it is $\Hase(e_1)$.

Figure~\ref{fig:foutfin} gives the definitions of $\Fout$ and $\Fin$
for the extensions. Function $\Fout$ just applies itself recursively
for $\Eind$, $\Enot$ and $\Eand$, but it needs to rewrite the
repetitions using $\Fin$ if their bodies can match $\Epsi$. Function
$\Fin$ applies itself recursively to atomic groupings, keeping them
atomic, but it strips repetitions. A repetition being rewritten by
$\Fin$ is used directly inside another repetition, so it does not
matter if it is possessive, lazy, or a regular greedy repetition,  it
is the outer repetition that will govern how much of the subject will
be matched. 

We do not need to define $\Fin$ for negative and positive
lookaheads, as pathological uses of these expressions are eliminated
by the $\Fout$ case that rewrites repetitions with an $\Isnull$ body and
the $\Fin$ cases that rewrite choices with $\Isnull$ alternatives.

The extensions do not impact the optimizations of Section~\ref{sec:opt} if we
provide a way of computing a $\Fst$ set for them, as the optimizations
do not depend on the structure of the subexpressions they use. We obviously
cannot apply the repetition optimization on $\Epos e_2$, $\Elazy
e_2$, or $\Eindd{(e_1^*)}\, e_2$, but applying it on $e_1^*e_2$ where
extensions appear inside $e_1$ or $e_2$ is not a problem. The
repetition optimizations are turning repetitions into possessive
repetitions where possible, so not being able to optimize expressions
such as the ones above is not a loss, as they will already exhibit
good backtracking behavior.

\begin{figure}[t]
\begin{align*}
\Fst(\Epsi) & \Myeq \emptyset \\
\Fst(a) & \Myeq \{ \mathtt{a} \} \\
\Fst(\Con{e_1}{e_2}) & \Myeq 
\left\{ \begin{array}{ll}
	\Fst(e_1) & \mbox{if\, $\Nemp(e_1)$} \\
	\Fst(e_1) \cup \Fst(e_2)   & \mbox{if\, $\Hase(e_1)$}
\end{array}
\right. \\
\Fst(\Chre{e_1}{e_2}) & \Myeq \Fst(e_1) \cup \Fst(e_2) \\
\Fst(e^*) & \Myeq \Fst(e) \\
\Fst(\Eindd{e}) & \Myeq \Fst(e) \\
\Fst(\Eposs{e}) & \Myeq \Fst(e) \\
\Fst(\Elazyy{e}) & \Myeq \Fst(e) \\
\Fst(\Enott{e}) & \Myeq \emptyset \\
\Fst(\Eandd{e}) & \Myeq \emptyset
\end{align*}
\caption{Definition of $\Fst$ sets for regexes}
\label{fig:first}
\end{figure}

Figure~\ref{fig:first} gives an inductive definition for the $\Fst$
sets of extended regexes. For completeness, we also give cases for the
standard regexes. In our definition of $\Fst$ sets in terms of relation
$\Lr$ the $\Fst$ sets cannot include $\Epsi$, so expressions that
never consume any prefix of the subject have empty $\Fst$ sets. The
$\Fst$ sets of atomic groupings are conservative, as they may be a
proper superset of the first characters that the expression actually
consumes; for example, $\Fst(\Eindd{(\Chre{\Epsi}{a})})$ is $\{
\mathtt{a} \}$ instead of the more precise $\emptyset$.

\section{Conclusion}
\label{sec:conclusion}

We presented a new formalization of regular expressions that uses
natural semantics and a transformation $\Pi$ that converts a given
regular expression into an equivalent PEG, that is, a PEG that matches
the same strings that the regular expression matches in a
Perl-compatible regex implementation. If the regular expression's
language has the prefix property, easily guaranteed by using an
end-of-input marker, the transformation yields a PEG that recognizes
the same language as the regular expression.

We also have shown how our transformation can be easily adapted to
accommodate several ad-hoc extensions used by regex libraries:
independent expressions, possessive and lazy repetition, and
lookahead. Our transformation gives a precise semantics to what were
informal extensions with behavior specified in terms of how
backtracking-based regex matchers are implemented.

We show that, for some classes of regular expressions, we produce PEGs
that perform better by reasoning about how the PEG's limited
backtracking and syntactical predicates work to control the amount of
backtracking that a PEG will perform. The same reasoning that we apply
for large classes of expressions can be applied to specific ones to
yield bigger performance gains where necessary, although our
benchmarks show that simple optimizations are enough to perform close
to optimized regex matchers, while having a much simpler
implementation: both regex engines we used have over ten times the
amount of code of the PEG engine.

Another approach to establish the correspondence between regular
expressions and PEGs was suggested by
Ierusalimschy~\cite{roberto:lpeg}. In this approach we convert
Deterministic Finite Automata (DFA) into right-linear LL(1)
grammars. Medeiros~\cite{sergio:tese} proves that an LL(1) grammar has
the same language when interpreted as a CFG and as a PEG. But this
approach cannot be used with regex extensions, as they cannot be
expressed by a DFA.

The transformation $\Pi$ is a formalization of the continuation-based
conversion presented by Oikawa et al.~\cite{japa:sblp}. That work only
presents an informal discussion of the correctness of the conversion,
while we proved our transformation correct with regards to the
semantics of regular expressions and PEGs.

We can also benefit from the LPEG parsing
machine~\cite{dls:lpeg,roberto:lpeg}, a virtual machine for executing
PEGs. We can use the cost model of the parsing machine instructions to
estimate how efficient a given regular expression or regex is. The
parsing machine has a simple architecture with just nine basic
instructions and four registers, and implementations of our
transformation coupled with implementations of the parsing machine can
be the basis for simpler implementations of regex libraries.
 
\bibliographystyle{model3a-num-names}
\bibliography{repeg}

\end{document}